\newtheorem{theorem}{Theorem}[section]
\newtheorem{lemma}[theorem]{Lemma}
\newtheorem{corollary}[theorem]{Corollary}
\theoremstyle{definition}
\newtheorem{example}[theorem]{Example}
\newcommand{\Z}{\mathbb Z}
\newcommand{\eps}{\varepsilon}
\DeclareMathOperator{\pex}{Exp}
\newcommand{\gex}{\overline \pex}
\newcommand{\mor}{\mathcal H}
\newcommand{\nonper}{\mathcal N}
\newcommand{\inj}{\mathcal I}
\newcommand{\sat}{\mathrm{EqSat}}
\newcommand{\satcf}{\mathrm{EqSatCF}}
\newcommand{\pow}{\mathrm{Pow}}
\newcommand{\nonprim}{\mathrm{NonPrim}}
\newcommand{\res}[1]{#1_\diamond}
\newcommand{\ext}[1]{#1^\diamond}
\DeclareMathOperator{\alphabet}{alph}
\title{Mapping words to powers by morphisms}
\author{Aleksi Saarela\thanks{Supported by the Research Council of Finland under grant 339311}}
\affil{Department of Mathematics and Statistics, University of Turku, Finland\\
\texttt{amsaar@utu.fi}}
\begin{document}

\maketitle

\begin{abstract}
We characterize the words that can be mapped to arbitrarily high powers by injective morphisms. For all other words, we prove a linear upper bound for the highest power that they can be mapped to, and this bound is optimal up to a constant factor if there is no restriction on the size of the alphabet. We also prove that, for any integer $n \geq 2$, deciding whether a given word can be mapped to an $n$th power by a nonperiodic morphism is NP-hard and in PSPACE, and so is deciding whether a given word can be mapped to a nonprimitive word by a nonperiodic morphism.
\end{abstract}

\section{Introduction}

We study the following question:
Given a word and a family of morphism,
to which powers can the word be mapped by the morphisms in the family?
We are especially interested in the highest power a word can be mapped into,
and in the algorithmic question of whether a word can be mapped to an $n$th power,
where $n \geq 2$ is either fixed or can be freely chosen by the algorithm.
The most natural families of morphisms to consider
are nonperiodic morphisms and injective morphisms.

In the article~\cite{mi97},
the morphisms that map all primitive words to primitive words are studied,
and there are connections to square-free morphism
and therefore to the classical topic of repetition-freeness of infinite words.
Some of the questions we study in this article might seem a bit similar,
but instead of looking at images of many words under a given morphism,
we look at images of a given word under many morphisms.

One source of motivation for this article is that
both powers and morphisms are fundamental concepts,
and the questions we study have very simple formulations,
which suggests that the questions are inherently interesting.
Moreover, powers and morphisms appear very frequently in connection to many other topics,
so the results in this article might find a use
when studying other questions in the area of combinatorics on words.
As an example,
knowing that certain words cannot be mapped to nonprimitive words by nonperiodic morphisms
has previously been useful in some constructions, such as in~\cite[Lemma 15]{sa20icalp}.

More specific motivation comes from the theory of word equations.
For example, a word $w$ can be mapped to a square by a nonperiodic morphism
if and only if
there exists a nonperiodic solution for the constant-free word equation $(w, x^2)$,
where we treat the letters of $w$ as variables
and where $x$ is a new variable that does not appear in $w$.
The same idea of course works for higher powers.
This immediately leads to a connection between our topic
and word equations where one side of the equation is a power of a variable.
These kinds of equations have been studied in many articles,
such as~\cite{lysc62,le65,apdj68,hako97,ho01,hano05,sa19jcta}.
There is also a less obvious connection to the satisfiability problem of word equations.
Some recent results on the satisfiability problem and its different variants
can be found in~\cite{je22,dagahemano18,sa20icalp}, for example.
These connections allow us to turn some known results on word equations
into results on mapping words to powers.
They also open up the possibility of studying some variants of the satisfiability problem
through arguably simpler decision problems
such as the problem of checking whether a word can be mapped to a square by a nonperiodic morphism.

Finally, there is a more algebraic source of motivation.
As an introductory example from classic algebra,
most integers are not squares of other integers.
However, given such an integer $n$,
we can always generate the ring $\Z[\sqrt{n}]$,
and in this ring, $n$ is a square.
However, the ring $\Z[\sqrt{n}]$ does not necessarily have all the nice properties
that the ring of integers has.
For example, it is not necessarily a unique factorization domain (UFD).
Studying the question of when $\Z[\sqrt{n}]$ is a UFD leads to very deep results in algebra.
Analogously, given a word $w$ that is not a square,
we can find an extension monoid of words
such that $w$ is a square in this extension monoid.
The concept of a free monoid can be seen as analogous to the concept of a UFD,
and we can ask the question of when can the extension monoid be chosen so that it is free.
This question is the same as asking whether
$w$ can be mapped to a square by an injective morphism.

The results in this article are divided into three sections.
In Section~\ref{sec:basic},
we prove some very basic results about mapping words to powers.
For example, we study whether the powers to which a word can be mapped
depend on the choice of alphabets or not.

In Section~\ref{sec:high},
we study the highest powers to which words can be mapped by injective morphisms.
We characterize the words that can be mapped to arbitrarily high powers.
For all other words,
we prove an upper bound that is linear with respect to the length of the word.
Moreover, we prove that this upper bound is optimal up to a constant factor
if all words over arbitrarily large alphabets are considered.
For binary words, there is a constant upper bound.
The behavior over ternary and larger alphabets remains an open question.

In Section~\ref{sec:algo},
we prove that the question of whether a given word can be mapped by some nonperiodic morphism
to a square, or to any other fixed power,
is polynomially equivalent to the nonperiodic satisfiability problem of word equations.
Consequently, this problem is NP-hard and in PSPACE.
The same is true for the problem of deciding whether a given word
can be mapped to a nonprimitive word.

\section{Preliminaries}

We start by recalling some standard notation, definitions, and results
related to combinatorics on words.
For more, see~\cite{chka97,lo02}.

We use the symbols $\Sigma, \Gamma$ to denote alphabets.
Alphabets are assumed to be nonunary and finite unless otherwise mentioned.
The empty word is denoted by $\eps$.
The length of a word $w$ is denoted by $|w|$
and the number of occurrences of a letter $a$ in $w$ is denoted by $|w|_a$.
The set of all letters that occur in $w$ is denoted by $\alphabet(w)$.

A word $u$ is a \emph{factor} of a word $v$
if there exist words $x, y$ such that $v = xuy$.
If we can choose $x = \eps$, then $u$ is a \emph{prefix},
if we can choose $y = \eps$, then $u$ is a \emph{suffix},
and if we can choose $x \ne \eps \ne y$, then $u$ is an \emph{internal factor} of $v$.
Words $u, v$ are \emph{conjugates}
if there exist words $x, y$ such that $u = xy$ and $v = yx$.

A nonempty word is \emph{primitive} if it is not a power of a shorter word,
and it is \emph{nonprimitive} otherwise.
If $u = v^n$ and $v$ is primitive, then $v$ is a \emph{primitive root} of $u$.
The following facts are well-known:
\begin{itemize}
    \item
    Every nonempty word has a unique primitive root.
    \item
    Words $u, v$ commute, that is, $uv = vu$,
    if and only if they are powers of a common word.
    \item
    Nonempty words commute
    if and only if they have the same primitive root.
    \item
    A conjugate of a primitive word is primitive.
    \item
    If $u$ is a primitive word, then $u$ is not an internal factor of $u^2$.
\end{itemize}

A mapping $h: \Sigma^* \to \Gamma^*$ is a \emph{morphism}
if $h(xy) = h(x)h(y)$ for all $x, y \in \Sigma^*$.
A morphism $h$ is \emph{periodic}
if there exists a word $r$ such that $h(x) \in r^*$ for all $x \in \Sigma$,
or equivalently if $h(xy) = h(yx)$ for all $x, y \in \Sigma$.
It is well-known that if $\Sigma$ is a binary alphabet,
then every nonperiodic morphism $\Sigma^* \to \Gamma^*$ is injective.

Let $\Sigma$ be an alphabet of variables and $\Gamma$ an alphabet of constants.
A \emph{constant-free word equation}
is a pair of words $(u, v) \in \Sigma^* \times \Sigma^*$.
A \emph{solution} of this equation is a morphism
\begin{math}
    h: \Sigma^* \to \Gamma^*
\end{math}
such that $h(u) = h(v)$.
The equation $(u, v)$
can be called an \emph{equation over $\Sigma$} or \emph{equation over $(\Sigma, \Gamma)$}.

The above definition can be extended to allow the equation to have constants:
A \emph{word equation with constants}
is a pair of words $(u, v) \in (\Sigma \cup \Gamma)^* \times (\Sigma \cup \Gamma)^*$.
A \emph{solution} of this equation is a morphism
\begin{math}
    h: \Sigma^* \to \Gamma^*
\end{math}
such that $\ext{h}(u) = \ext{h}(v)$,
where $\ext{h}: (\Sigma \cup \Gamma)^* \to \Gamma^*$ is the extension of $h$
defined by $\ext{h}(a) = a$ for all $a \in \Gamma$.
Obviously, we need to assume here that $\Sigma$ and $\Gamma$ are disjoint.

A \emph{system of equations} is a set of equations.
A \emph{solution} of a system is a morphism that is a solution of every equation in the system.
The \emph{length} of an equation $(u, v)$ is $|uv|$,
and the length of a system of equations is the sum of the lengths of the equations in the system.

Checking whether a given system of equations with constants has a solution
can be done in nondeterministic linear space~\cite{je22}.
On the other hand, it is NP-hard.
Checking whether a given system of constant-free equations has a nonperiodic solution
is also NP-hard~\cite{sa20icalp}.

\begin{example}
    The constant-free word equation $(x^2, yzy)$ over $(\{x, y, z\}, \Gamma)$
    has nonperiodic solutions $h$ defined by
    \begin{equation*}
        h(x) = (pq)^{i + 1} p,\qquad
        h(y) = (pq)^i p,\qquad
        h(z) = qppq
    \end{equation*}
    for all $p, q \in \Gamma^*$, $pq \ne qp$, $i \in \Z_{\geq 0}$,
    and periodic solutions $h$ defined by
    \begin{equation*}
        h(x) = p^{i + j},\qquad
        h(y) = p^i,\qquad
        h(z) = p^{2j}
    \end{equation*}
    for all $p \in \Gamma^*$, $i, j \in \Z_{\geq 0}$.
\end{example}

The following result is known as the theorem of Lyndon and Sch\"utzenberger~\cite{lysc62}.

\begin{theorem}
    Let $\Sigma = \{x, y, z\}$ and $k, m, n \geq 2$.
    The constant-free word equation $(x^k, y^m z^n)$ has only periodic solutions.
\end{theorem}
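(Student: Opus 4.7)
Let $X = h(x)$, $Y = h(y)$, $Z = h(z)$, so a solution is a triple with $X^k = Y^m Z^n$, and we must show that $X,Y,Z$ lie in a common $r^*$. First I would dispose of the degenerate cases: if $Y = \eps$ (or $Z = \eps$) the equation reduces to $X^k = Z^n$ (respectively $X^k = Y^m$), and two words whose powers coincide necessarily share a primitive root, so $h$ is periodic. Henceforth I assume $Y,Z$ are both nonempty. Next, by replacing $X$ with its primitive root $X_0$ (so $X = X_0^j$ and the equation becomes $X_0^{jk} = Y^m Z^n$), I may assume $X$ itself is primitive; it then suffices to show $Y,Z \in X^*$, because any $Y,Z \in X^*$ automatically puts the original triple inside $X^*$.

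The main argument is an induction on $|X|$. The base case $|X| = 1$ is trivial since $X^k$ is then a power of a single letter, forcing the same for $Y^m$ and $Z^n$. For the inductive step I would split according to where $Y^m$ ends inside $X^k$: writing $m|Y| = q|X| + r$ with $0 \leq r < |X|$, if $r = 0$ then $Y^m = X^q$ and $Z^n = X^{k-q}$, so the uniqueness of the primitive root immediately gives $Y,Z \in X^*$. If $r > 0$ the interesting alignment is that the $(q+1)$st copy of $X$ is split between $Y^m$ and $Z^n$, and since $m,n \geq 2$ the relevant prefix $Y^m$ of $X^k$ carries both the period $|Y|$ and the period $|X|$, and symmetrically for the suffix $Z^n$. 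When the length bound of Fine--Wilf is met (which is automatic once, say, $|Y| \geq |X|$), this forces $\gcd(|X|,|Y|)$ to be a period of a factor containing a full $X$, and combining this with the fact that a primitive word is not an internal factor of its square yields $|X| \mid |Y|$ and hence $Y \in X^*$.

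The genuine obstacle is the remaining regime where $|Y|$ (or $|Z|$) is small enough that Fine--Wilf does not apply directly. Here the plan is to turn the given equation into a strictly smaller equation of the same shape and apply the induction hypothesis. Concretely, if $|Y| < |X|$ one extracts the factorization $X = Y^t \tilde Y$ with $\tilde Y$ a proper prefix of $Y$, rewrites $X^k = Y^m Z^n$ as an equation between shorter words by cancelling the common prefix $Y^t$, and then possibly cyclically rotates or swaps the roles of $Y$ and $Z$ so that all three exponents in the rewritten equation remain at least $2$. The bookkeeping needed to ensure that the reduced instance is well-formed (and that periodicity of its solution lifts back) is the delicate part of the argument; once it is carried out, the induction hypothesis immediately finishes the proof.
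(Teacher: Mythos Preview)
The paper does not prove this statement at all: it quotes it as the classical theorem of Lyndon and Sch\"utzenberger and gives only the reference~\cite{lysc62}. So there is nothing to compare your argument against on the paper's side; the relevant question is simply whether your plan stands on its own.

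Your overall strategy is one of the standard ones, and the easy parts are fine: disposing of the empty cases, reducing to primitive $X$, and using Fine--Wilf to finish whenever $(m-1)|Y| \geq |X|$ (and symmetrically for $Z$) are all correct. The genuine gap is exactly where you locate it yourself. In the remaining regime one has $(m-1)|Y| < |X|$ and $(n-1)|Z| < |X|$, and there the advertised reduction ``rewrite as a strictly smaller equation of the same shape with all three exponents still at least $2$'' does not obviously exist. Writing $X = Y^t \tilde Y$ and cancelling a prefix $Y^t$ from $X^k = Y^m Z^n$ does not produce an equation of the form $A^{k'} = B^{m'} C^{n'}$ with $k',m',n' \geq 2$; what comes out is a conjugacy-type relation among $Y$, $\tilde Y$ and $Z$, and turning that into a usable inductive instance requires a real case analysis (in the classical proofs one first bounds $k \leq 3$ from $(m-1)|Y| + (n-1)|Z| < 2|X|$ and then treats $k=2$ and $k=3$ separately, using further overlap arguments rather than a single recursive call). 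So the plan is pointed in the right direction, but the step you label ``delicate'' is not just bookkeeping---it is where essentially all the content of the Lyndon--Sch\"utzenberger proof lives, and as written the induction does not close.
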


Another way to formulate this theorem is that if $h$ is a nonperiodic morphism,
then $h(y^m z^n)$ is primitive.
The next theorem gives a generalization of this result.
It was proved by Spehner~\cite{sp76}
and independently by Barbin-Le Rest and Le Rest \cite{bale85}.
Different proofs can be found in~\cite{sa18fi} and~\cite{horast}.

\begin{theorem} \label{thm:morph-prim}
    Let $h: \{y, z\}^* \to \Gamma^*$ be a nonperiodic morphism and let
    \begin{equation*}
        S = \{w \in \{y, z\}^+ \smallsetminus \{y, z\}
            \mid \text{$w$ is primitive and $h(w)$ is nonprimitive}\}.
    \end{equation*}
    Either $S = \varnothing$
    or $S$ is the conjugacy class of some primitive word $w$ such that $|w|_y = 1$ or $|w|_z = 1$.
\end{theorem}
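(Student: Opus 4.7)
The plan is to pin down $S$ in three successive reductions: show $S$ is closed under conjugation, restrict the letter counts of its elements, and then show there is only one conjugacy class.

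\textbf{Conjugation closure.} If $w = uv \in S$, then its cyclic shift $vu$ is also primitive (conjugates of primitive words are primitive), and $h(vu) = h(v)h(u)$ is a conjugate of $h(uv) = h(u)h(v)$, so it is nonprimitive iff $h(uv)$ is. Hence $S$ is a union of conjugacy classes.

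\textbf{Restricting letter counts.} Next I would show that every $w \in S$ has $|w|_y = 1$ or $|w|_z = 1$. Up to conjugation write $w = y^{a_1} z^{b_1} \cdots y^{a_k} z^{b_k}$ with all exponents positive, and assume for contradiction that $|w|_y \geq 2$ and $|w|_z \geq 2$. If $k = 1$, then $a_1, b_1 \geq 2$, and applying the preceding Lyndon-Sch\"utzenberger theorem to the proper power $h(w) = h(y)^{a_1} h(z)^{b_1}$ forces $h(y)$ and $h(z)$ to commute, contradicting nonperiodicity. If $k \geq 2$ the argument is more delicate. Since $h$ is nonperiodic, $h(y)$ and $h(z)$ do not commute, so by the defect theorem for two words the set $\{h(y), h(z)\}$ is a code, giving $h(w)$ a unique factorization as a product of $h(y)$'s and $h(z)$'s. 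Writing $h(w) = r^n$ with $r$ primitive and $n \geq 2$, one analyses how the length $|r|$ sits with respect to the block boundaries of this code factorization: if $|r|$ is block-aligned, then decoding yields $w = (w')^n$ for some $w'$, contradicting primitivity; otherwise a careful Fine-Wilf style overlap analysis between the period $|r|$ and the local periods inside the blocks $h(y)^{a_i}$ and $h(z)^{b_j}$ should force $h(y), h(z)$ to commute.

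\textbf{Single conjugacy class.} By the previous step, after conjugation every element of $S$ has the form $y z^n$ or $y^m z$. Suppose $w_1 = y z^{n_1}$ and $w_2 = y z^{n_2}$ are both in $S$ with $n_1 < n_2$. Then $h(y) h(z)^{n_1}$ and $h(y) h(z)^{n_2} = h(y) h(z)^{n_1} \cdot h(z)^{n_2 - n_1}$ are both proper powers; the two overlapping periodic constraints combine to force $h(y)$ and $h(z)$ to commute. The mixed case, where one element of $S$ has $|w|_y = 1$ and another has $|w|_z = 1$, reduces to the same kind of period-comparison argument.

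\textbf{Main obstacle.} The hard part throughout is the multi-block step $k \geq 2$ in Step 2: Lyndon-Sch\"utzenberger only covers two-block words $y^a z^b$, and in the general alternating case one must track the interaction between a hypothetical primitive root $r$ of $h(w)$ and the code-factorization boundaries, ruling out every configuration except those leading to $h$ being periodic or $w$ being nonprimitive.
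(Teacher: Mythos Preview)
The paper does not prove this theorem at all: it is quoted as a known result due to Spehner and, independently, Barbin-Le~Rest and Le~Rest, with further proofs only cited. So there is no in-paper argument to compare yours against.

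As for the proposal on its own terms, the three-stage outline (closure under conjugation, restriction to $|w|_y = 1$ or $|w|_z = 1$, uniqueness of the conjugacy class) is the right shape, and Step~1 is complete and correct. But the substance of the theorem lies precisely where you hand-wave. In Step~2 with $k \geq 2$, saying that ``a careful Fine--Wilf style overlap analysis \dots\ should force $h(y), h(z)$ to commute'' is a hope, not an argument; the published proofs are genuinely delicate at this point and do not reduce to a routine periodicity lemma. In Step~3, the claim that two proper-power relations $h(y)h(z)^{n_1} = r_1^{m_1}$ and $h(y)h(z)^{n_2} = r_2^{m_2}$ with $n_1 < n_2$ force commutation is again only asserted: rewriting the second as $r_1^{m_1}\, h(z)^{n_2 - n_1} = r_2^{m_2}$ does not put you in Lyndon--Sch\"utzenberger form, and the mixed case ($yz^n$ against $y^m z$) is dismissed in a phrase. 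You have correctly identified the ``main obstacle''; filling it in is essentially the entire content of the cited papers, so what you have submitted is a plan of attack rather than a proof.
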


To conclude this section, we recall the periodicity theorem of Fine and Wilf~\cite{fiwi65}.

\begin{theorem}
    Let $u, v \in \Sigma^+$ and $m, n \geq 1$.
    If $u^m$ and $v^n$ have a common prefix of length
    \begin{math}
        |uv| - \gcd(|u|, |v|),
    \end{math}
    then $u$ and $v$ are powers of a common word of length $\gcd(|u|, |v|)$.
\end{theorem}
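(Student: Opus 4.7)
The plan is to reduce the theorem to a purely combinatorial fact about periods of a single word. Set $d = \gcd(|u|,|v|)$ and let $P$ denote the common prefix of $u^m$ and $v^n$ of length $|u|+|v|-d$ guaranteed by the hypothesis. Then $P$ admits both $|u|$ and $|v|$ as periods, meaning $P[i] = P[i+|u|]$ whenever both indices are in range, and analogously for $|v|$.

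The core combinatorial step is the following claim, which I would prove separately: \emph{if a word $w$ admits periods $p$ and $q$ and $|w| \geq p + q - \gcd(p,q)$, then $w$ admits period $\gcd(p,q)$.} My proof would proceed by induction on $p + q$, in Euclidean-algorithm style. The base case $p = q$ is immediate. Otherwise, WLOG $p > q$, and the key sublemma is that $w$ also admits $p - q$ as a period: for most positions $i$ this is obtained by chaining the two given periods (shifting by $+p$ then $-q$, or by $-q$ then $+p$), and the sharp length bound is exactly what is needed to cover the boundary positions where only one of these chains is directly available. Since $\gcd(p-q, q) = \gcd(p, q)$ and $(p-q) + q < p + q$, the induction hypothesis applied to the pair $(p-q, q)$ then yields period $\gcd(p,q)$ for $w$.

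Applying the claim to $P$ gives that $P$ has period $d$. Since $u$ is a prefix of $P$ (because $|u| \leq |P|$), $u$ inherits period $d$, and because $|u|$ is a multiple of $d$, $u$ must equal $r^{|u|/d}$ where $r$ denotes the length-$d$ prefix of $P$. The same reasoning gives $v = r^{|v|/d}$, so $u$ and $v$ are powers of a common word of length $d$, as required.

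The main obstacle is the careful index arithmetic in the sublemma: the naive chaining argument alone only delivers the result under the weaker bound $|w| \geq p + q$, so a slightly more delicate analysis is needed to handle the narrow boundary range of width $d$ and thereby achieve the sharp bound $|w| \geq p + q - d$. Once that sharp form of the period lemma is established, the rest of the argument — extracting $P$, transferring the period to $u$ and $v$, and writing them as powers of $r$ — is routine.
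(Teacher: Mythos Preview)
Your proof sketch is correct and follows one of the standard textbook routes to the Fine--Wilf theorem: reduce to the period lemma for a single word, then prove that lemma by Euclidean-style induction on $p+q$. You have also correctly identified the one genuinely delicate point, namely that the naive chaining argument (shift by $+p$ then $-q$, or vice versa) only covers positions $i$ with $i+p \le |w|$ or $i-q \ge 1$, leaving a gap of width at most $d = \gcd(p,q)$ when $|w| < p+q$; handling that boundary strip is exactly what distinguishes the sharp bound $p+q-d$ from the easy bound $p+q$.

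As for comparison with the paper: there is nothing to compare. The paper does not prove this theorem at all; it is stated in the preliminaries as the periodicity theorem of Fine and Wilf with a citation to~\cite{fiwi65}, and is then used as a black box (in the proof of Theorem~\ref{thm:nonprim}). So your proposal supplies a proof where the paper simply invokes the literature. If you want to include a proof, your outline is fine; just be aware that the full argument for the sharp bound, while not long, does require the extra care you allude to, and you should either spell it out or cite a source (e.g., Lothaire) that does.
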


\section{Basic results} \label{sec:basic}

Let $\Sigma$ and $\Gamma$ be alphabets and $H$ a family of morphisms $\Sigma^* \to \Gamma^*$.
We are interested in the relation $h(w) = r^n$,
where $h \in H$, $w \in \Sigma^+$, $r \in \Gamma^+$, $n \in \Z_+$,
and sometimes we might want to add a requirement that $r$ is primitive.
More specifically,
we are interested in the possible values of the exponent $n$ for a fixed word $w$.
We define the sets of such exponents
\begin{align*}
    \pex_H(w) &=
    \{n \in \Z_+ \mid \exists h \in H, r \in \Sigma^+: h(w) = r^n \text{ and $r$ is primitive}\},\\
    \gex_H(w) &=
    \{n \in \Z_+ \mid \exists h \in H, r \in \Sigma^+: h(w) = r^n\}.
\end{align*}
The following families of morphisms are very natural:
\begin{itemize}
    \item
    The family of all morphisms $\Sigma^* \to \Gamma^*$,
    denoted by $\mor(\Sigma, \Gamma)$.
    \item
    The family of nonperiodic morphisms $\Sigma^* \to \Gamma^*$,
    denoted by $\nonper(\Sigma, \Gamma)$.
    \item
    The family of injective morphisms $\Sigma^* \to \Gamma^*$,
    denoted by $\inj(\Sigma, \Gamma)$.
\end{itemize}
If $\Sigma$ and $\Gamma$ are clear from context, or if the choice of alphabets does not matter,
we can denote $\mor(\Sigma, \Gamma)$, $\nonper(\Sigma, \Gamma)$, $\inj(\Sigma, \Gamma)$
by just $\mor$, $\nonper$, $\inj$, respectively.

\begin{example} \label{exa:aabb}
    Let $\Sigma = \Gamma = \{a, b\}$.
    In this example,
    we determine $\pex_\nonper(w)$ and $\gex_\nonper(w)$ for $w \in \{aa, aab, aabb\}$.
    For all $n \in \Z_+$,
    let $h_n$ be the morphism defined by $h_n(a) = a^n$, $h_n(b) = b$,
    and let $g_n$ be the morphism defined by $g_n(a) = a$, $g_n(b) = b(aab)^{n - 1}$.
    We have $h_n, g_n \in \nonper$.
    Because $h_n(aa) = (aa)^n$, the word $aa$ can be mapped to an $n$th power for all $n$.
    However, $h(aa)$ is a square for all morphisms $h$,
    so $aa$ can only be mapped to even powers of primitive words.
    Therefore
    \begin{equation*}
        \pex_\nonper(aa) = 2 \Z_+, \qquad \gex_\nonper(aa) = \Z_+.
    \end{equation*}
    The word $aab$ can be mapped to any power of a primitive word: $g_n(aab) = (aab)^n$.
    Thus
    \begin{equation*}
        \pex_\nonper(aab) = \gex_\nonper(aab) = \Z_+.
    \end{equation*}
    The word $aabb$, on the other hand,
    cannot be mapped by a nonperiodic morphism to an $n$th power for any $n \geq 2$, that is,
    \begin{equation*}
        \pex_\nonper(aabb) = \gex_\nonper(aabb) = \{1\}.
    \end{equation*}
    This follows from the theorem of Lyndon and Sch\"utzenberger.
\end{example}

The set $\gex_H(w)$ is completely determined by the set $\pex_H(w)$:
\begin{equation} \label{eq:gexpex}
    \gex_H(w) = \{d \in \Z_+ \mid \exists n \in \pex_H(w): d|n\}.
\end{equation}
In this section we prove several results of the type that
\begin{equation*}
    \pex_H(w) = \pex_{H'}(w')
    \qquad \text{and} \qquad
    \gex_H(w) = \gex_{H'}(w')
\end{equation*}
for some families of morphisms $H, H'$ and words $w, w'$.
By \eqref{eq:gexpex}, the second of these two equalities follows from the first,
so it is sufficient to prove the first one.
The converse is not true.
For instance, we saw in Example~\ref{exa:aabb} that
$\gex_\nonper(aa) = \gex_\nonper(aab)$ but $\pex_\nonper(aa) \ne \pex_\nonper(aab)$.

By the next theorem,
we can usually concentrate on the case where $w$ is primitive
when studying the sets $\pex_H(w)$ and $\gex_H(w)$,

\begin{theorem} \label{thm:prim-power}
    Let $H$ be a family of morphisms
    and let $w = r^k$ where $r$ is a primitive word.
    Then
    \begin{align*}
        \pex_H(w) &= \{kn \in \Z_+ \mid n \in \pex_H(r)\},\\
        \gex_H(w) &= \{k'n' \in \Z_+ \mid n' \in \gex_H(r) \text{ and } k'|k\}.
    \end{align*}
\end{theorem}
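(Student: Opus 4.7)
The plan is to prove the first equality directly and then derive the second from equation~\eqref{eq:gexpex}. The whole argument hinges on a standard fact about primitive roots: for every nonempty word $u$ and every $j \geq 1$, the primitive root of $u^j$ coincides with the primitive root of $u$. The degenerate case $k = 0$ (in which $w = \eps$ and both sides are empty) is immediate, so I may freely assume $k \geq 1$.

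For the inclusion $\supseteq$ in $\pex_H(w) = \{kn \in \Z_+ \mid n \in \pex_H(r)\}$, I pick $n \in \pex_H(r)$ together with a witness $h \in H$ and primitive $s$ such that $h(r) = s^n$. Applying $h$ to $w = r^k$ yields $h(w) = s^{nk}$, so $nk \in \pex_H(w)$. For the reverse inclusion, I suppose $m \in \pex_H(w)$ with witness $h \in H$ and primitive $s$ satisfying $h(w) = s^m$. Then $h(r)^k = s^m$ is nonempty, hence $h(r) \neq \eps$. Since $s$ is the primitive root of $s^m$, and the primitive root of $h(r)^k$ equals the primitive root of $h(r)$, uniqueness of primitive roots forces $h(r) = s^n$ for some $n \in \Z_+$, and then $n \in \pex_H(r)$ with $m = nk$.

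For the $\gex$ identity I combine \eqref{eq:gexpex} with the first equality: $m \in \gex_H(w)$ iff $m \mid kn$ for some $n \in \pex_H(r)$. To match the stated form, I set $k' = \gcd(m, k)$ and write $m = k' m'$, $k = k' k_1$ with $\gcd(m', k_1) = 1$. The divisibility $k'm' \mid k'k_1 n$ then reduces to $m' \mid k_1 n$, and coprimality gives $m' \mid n$. Hence $m'$ lies in $\gex_H(r)$ by \eqref{eq:gexpex}, while $k' \mid k$ by construction, so $m = k' m'$ has the required form. The converse is a direct divisibility check: if $k' \mid k$ and $n' \mid n$ for some $n \in \pex_H(r)$, then $k' n' \mid kn$, so $k' n' \in \gex_H(w)$.

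No serious obstacle is expected. The only subtle point is to note that $h(r) \neq \eps$ before invoking primitive roots, which is free because the witnessing power $s^m$ is itself nonempty.
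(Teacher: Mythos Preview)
Your argument is correct and essentially coincides with the paper's proof. The treatment of $\pex_H(w)$ is identical. For $\gex_H(w)$, you route both inclusions through \eqref{eq:gexpex} and the already-established $\pex$ identity, and your $\gcd$ decomposition makes explicit the factorisation step that the paper states tersely as ``we can write $m = k'n'$, where $k'|k$ and $n'|n$''; the paper instead verifies the inclusion $\supseteq$ for $\gex$ by a direct construction $h(w) = (u^j)^{k'n'}$, but the content is the same.
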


\begin{proof}
    First, we prove the claim about $\pex_H(w)$.
    If $n \in \pex_H(r)$, then $h(r) = s^n$ for some $h \in H$ and primitive $s$,
    and then
    \begin{equation*}
        h(w) = h(r)^k = s^{kn},
    \end{equation*}
    so $kn \in \pex_H(w)$.
    On the other hand,
    if $m \in \pex_H(w)$, then $h(w) = t^m$ for some $h \in H$ and primitive $t$,
    and we can write $h(r) = s^n$ for some $n \in \pex_H(r)$ and primitive $s$.
    Then
    \begin{equation*}
        t^m = h(w) = h(r)^k = s^{kn},
    \end{equation*}
    so it must be $t = s$ and $m = kn$.

    Next, we prove the claim about $\gex_H(w)$.
    If $n' \in \gex_H(r)$ and $k = jk'$ for some integers $j$ and $k'$,
    then $h(r) = u^{n'}$ for some $h \in H$ and word $u$,
    and then
    \begin{equation*}
        h(w) = h(r)^k = u^{kn'} = (u^j)^{k'n'},
    \end{equation*}
    so $k'n' \in \gex_H(w)$.
    On the other hand,
    if $m \in \gex_H(w)$, then $h(w) = u^m$ for some $h \in H$ and word $u$,
    and we can write $h(r) = s^n$ for some $n \in \pex_H(r)$ and primitive $s$.
    Then
    \begin{equation*}
        u^m = h(w) = h(r)^k = s^{kn},
    \end{equation*}
    so it must be $m|kn$.
    It follows that we can write $m = k'n'$, where $k'|k$ and $n'|n$,
    and $n' \in \gex_H(r)$ by \eqref{eq:gexpex}.
\end{proof}

%
%

The next theorem shows that the family of all morphisms is not very interesting:
The set $\gex_{\mor}(w)$ is completely trivial,
and determining $\pex_{\mor}(w)$
can be reduced to determining $\pex_{\nonper}(w)$.

\begin{theorem} \label{thm:mor}
    Let $\Sigma = \{a_1, \dots, a_m\}$ and $\Gamma$ be alphabets.
    For every word $w \in \Sigma^+$, we have
    \begin{align*}
        \gex_{\mor}(w) &= \Z_+,\\
        \pex_{\mor}(w) &= \pex_{\nonper}(w) \cup
            \Big\{\sum_{i = 1}^m k_i |w|_{a_i} \mid k_1, \dots, k_m \in \Z_{\geq 0}\Big\}
            \smallsetminus \{0\}.
    \end{align*}
\end{theorem}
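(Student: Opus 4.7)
The plan is to dispose of the two equalities separately, and for the $\pex_{\mor}$ claim to use a simple dichotomy on whether a witnessing morphism is periodic. For $\gex_{\mor}(w) = \Z_+$, the strategy is purely constructive: given any $n \in \Z_+$, fix a letter $a \in \Gamma$ and define $h$ by $h(a_i) = a^n$ for every $i$. Then $h(w) = a^{n|w|} = (a^{|w|})^n$, so $n \in \gex_{\mor}(w)$.

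For the equality describing $\pex_{\mor}(w)$, the inclusion $\supseteq$ splits into two pieces. The first piece, $\pex_{\nonper}(w) \subseteq \pex_{\mor}(w)$, is immediate from $\nonper \subseteq \mor$. For the second piece, given nonnegative integers $k_1, \dots, k_m$ with $n = \sum_{i=1}^m k_i |w|_{a_i} \geq 1$, I would fix any letter $a \in \Gamma$ and take the periodic morphism defined by $h(a_i) = a^{k_i}$. Then $h(w) = a^n$, and since $a$ is primitive, this places $n$ in $\pex_{\mor}(w)$.

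For the reverse inclusion $\subseteq$, let $n \in \pex_{\mor}(w)$, chosen together with $h \in \mor$ and a primitive $r$ satisfying $h(w) = r^n$. If $h$ is nonperiodic, then by definition $n \in \pex_{\nonper}(w)$. Otherwise $h$ is periodic, so there is a word $s$ with $h(a_i) \in s^*$ for all $i$; replacing $s$ by its primitive root, I may assume $s$ is primitive. Writing $h(a_i) = s^{k_i}$ with $k_i \in \Z_{\geq 0}$, I then obtain $h(w) = s^{\sum_i k_i |w|_{a_i}}$. The uniqueness of the primitive root of the nonempty word $h(w) = r^n$ forces $s = r$ and $n = \sum_i k_i |w|_{a_i}$, so $n$ lies in the second set of the claimed union. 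The only delicate bookkeeping is the standard reduction to a primitive period in the periodic case, and no genuine obstacle arises in the argument.
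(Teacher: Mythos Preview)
Your proof is correct and follows essentially the same approach as the paper: both establish $\gex_{\mor}(w)=\Z_+$ via the morphism $a_i\mapsto a^n$, and both handle $\pex_{\mor}(w)$ by the periodic/nonperiodic dichotomy, identifying the periodic contribution with the set $\{\sum_i k_i|w|_{a_i}\}\smallsetminus\{0\}$ via $h(a_i)=r^{k_i}$ for a primitive $r$. Your version is simply a bit more explicit, spelling out both inclusions and invoking uniqueness of the primitive root where the paper compresses this into a single ``if and only if'' sentence.
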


\begin{proof}
    For any $n \in \Z_+$, we can define a morphism $h$ by $h(a_i)= a^n$ for all $i$,
    and then $h(w) = (a^{|w|})^n$, so $n \in \gex_{\mor}(w)$.

    We have $h \in \mor \smallsetminus \nonper$
    if and only if
    there exists a primitive word $r$ and nonnegative integers $k_i$
    such that $h(a_i)= r^{k_i}$ for all $i$.
    Then $h(w) = r^n$, where $n = \sum_{i = 1}^m k_i |w|_{a_i}$,
    so the set $\pex_{\mor \smallsetminus \nonper}(w)$
    consists of exactly these numbers $n$, except 0.
\end{proof}

The next theorem shows that, in the case of the families $\mor, \nonper, \inj$,
the choice of the target alphabet $\Gamma$ is not significant.

\begin{theorem} \label{thm:gamma}
    Let $H \in \{\mor, \nonper, \inj\}$
    and let $\Sigma, \Gamma_1, \Gamma_2$ be alphabets.
    Let $w \in \Sigma^+$.
    Then
    \begin{equation*}
        \pex_{H(\Sigma, \Gamma_1)}(w) = \pex_{H(\Sigma, \Gamma_2)}(w)
        \qquad \text{and} \qquad
        \gex_{H(\Sigma, \Gamma_1)}(w) = \gex_{H(\Sigma, \Gamma_2)}(w).
    \end{equation*}
\end{theorem}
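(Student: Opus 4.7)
My plan is to reduce the statement to the case where both target alphabets are binary, using the observation that $\gex_H(w)$ is determined by $\pex_H(w)$ via equation~\eqref{eq:gexpex}, so it suffices to prove the $\pex$ equality. Concretely, I will show that for every alphabet $\Gamma$ containing two distinguished letters $a, b$,
\[
    \pex_{H(\Sigma, \Gamma)}(w) = \pex_{H(\Sigma, \{a, b\})}(w),
\]
and then apply this once to $\Gamma_1$ and once to $\Gamma_2$; since renaming the two letters of a binary alphabet induces an obvious bijection on morphisms that preserves nonperiodicity, injectivity, and the image of $w$ up to a primitive root, this yields the theorem. The inclusion $\supseteq$ is immediate: a morphism $\Sigma^* \to \{a,b\}^*$ is also a morphism $\Sigma^* \to \Gamma^*$, and nonperiodicity and injectivity are intrinsic to the morphism and so survive enlarging the codomain.

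For the inclusion $\subseteq$, I plan to construct an injective, primitivity-preserving encoding morphism $\phi : \Gamma^* \to \{a, b\}^*$. Writing $\Gamma = \{c_1, \ldots, c_k\}$, the prefix-code definition $\phi(c_i) = a^i b$ is manifestly injective. Given $h \in H(\Sigma, \Gamma)$ witnessing $n \in \pex_{H(\Sigma, \Gamma)}(w)$ via $h(w) = r^n$ with $r$ primitive, I set $h' = \phi \circ h$; then $h'(w) = \phi(r)^n$, and provided $\phi(r)$ is also primitive, this shows $n \in \pex_{\mor(\Sigma, \{a, b\})}(w)$. To upgrade to $H \in \{\nonper, \inj\}$ I must check that $h'$ inherits the relevant property: injectivity is closed under composition, and if $h$ is nonperiodic then $h(x)h(y) \ne h(y)h(x)$ for some $x, y \in \Sigma$, whence injectivity of $\phi$ gives $h'(x)h'(y) \ne h'(y)h'(x)$, so $h'$ is nonperiodic as well.

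The only genuinely combinatorial step, and thus the main obstacle, is showing that $\phi$ preserves primitivity. Suppose $r = c_{i_1}\cdots c_{i_m}$ is primitive and $\phi(r) = s^j$ for some $j \ge 2$. Counting occurrences of $b$ on both sides forces $j \mid m$. Because each $b$ in $\phi(r)$ sits at a position uniquely determined by $i_1, \ldots, i_\ell$, factoring $\phi(r)$ at its $b$'s recovers both $s = a^{i_1}b\cdots a^{i_{m/j}}b$ and the periodicity relation $i_{\ell + m/j} = i_\ell$ for every $\ell$, contradicting primitivity of $r$. Everything else in the argument is routine transfer along the injective morphism $\phi$.
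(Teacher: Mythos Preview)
Your proposal is correct and follows essentially the same approach as the paper: both use the prefix-code encoding $c_i \mapsto a^i b$, verify that it is injective and primitivity-preserving, and then transfer $h$ to $\phi \circ h$ while checking that nonperiodicity and injectivity survive. The only cosmetic difference is that you route through a common binary alphabet and invoke symmetry, whereas the paper encodes $\Gamma_1$ directly into $\Gamma_2$ and then appeals to the symmetric argument; the substance is identical.
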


\begin{proof}
    Let $\Gamma_1 = \{a_1, \dots, a_m\}$ and let $a, b \in \Gamma_2$ be distinct.
    Let $g: \Gamma_1 \to \Gamma_2$ be the morphism defined by $g(a_i) = a^i b$ for all $i$.
    Then $g$ is injective.
    Moreover, we can see that $g$ maps primitive words to primitive words:
    If $g(x) = y^k$ for some words $x, y$ and positive integer $k$,
    then $y$ ends with $b$, so $y = g(z)$ for some word $z$,
    and then from $g(x) = g(z^k)$ and the injectivity of $g$ it follows that $x = z^k$.
    This means that the primitivity of $x$ implies the primitivity of $g(x)$.

    Let $n \in \pex_{H(\Sigma, \Gamma_1)}(w)$
    and let $h(w) = r^n$ for some $h \in H(\Sigma, \Gamma_1)$ and primitive word $r$.
    We have $g(h(w)) = g(r)^n$ and $g(r)$ is primitive.
    Moreover, if $h$ is nonperiodic, then so is $g \circ h$,
    and if $h$ is injective, then so is $g \circ h$,
    so in all cases, $g \circ h \in H(\Sigma, \Gamma_2)$.
    Therefore    $n \in \pex_{H(\Sigma, \Gamma_2)}(w)$.
    This shows that
    \begin{math}
        \pex_{H(\Sigma, \Gamma_1)}(w) \subseteq \pex_{H(\Sigma, \Gamma_2)}(w),
    \end{math}
    and the other direction is of course symmetric.
\end{proof}

In the case of $\nonper(\Sigma, \Gamma)$,
the choice of the alphabet $\Sigma$ can make a difference,
but for a rather trivial reason:
If $\Sigma$ is larger than $\alphabet(w)$,
then any periodic morphism can be turned into a nonperiodic morphism
without changing the image of $w$.
This leads to the following result.

\begin{theorem} \label{thm:nonper-sigma}
    Let $w \in \Sigma^+$.
    If $\alphabet(w) \ne \Sigma = \{a_1, \dots, a_m\}$, then
    \begin{align*}
        \gex_{\nonper(\Sigma, \Gamma)}(w) &= \Z_+,\\
        \pex_{\nonper(\Sigma, \Gamma)}(w)
        &= \pex_{\nonper(\alphabet(w), \Gamma)}(w)
            \cup \Big\{\sum_{i = 1}^m k_i |w|_{a_i} \mid k_1, \dots, k_m \in \Z_{\geq 0}\Big\}
            \smallsetminus \{0\}.
    \end{align*}
\end{theorem}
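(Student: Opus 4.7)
The plan is to mirror the proof of Theorem~\ref{thm:mor}, with the added twist that the assumption $\alphabet(w) \ne \Sigma$ supplies an extra letter $c \in \Sigma \smallsetminus \alphabet(w)$ which can be sent anywhere we like in order to enforce nonperiodicity, without affecting $h(w)$. I would fix such a $c$ together with two distinct letters $a, b \in \Gamma$ and use them as the building blocks of every morphism constructed in the forward direction.

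For $\gex_{\nonper(\Sigma, \Gamma)}(w) = \Z_+$ and the forward inclusions of the $\pex$ claim, I would build explicit morphisms. Given $n \in \Z_+$, setting $h(a_i) = a^n$ for each $a_i \in \alphabet(w)$, $h(c) = b$, and $h$ arbitrary on remaining letters yields a nonperiodic morphism, since $a^n$ and $b$ do not share a primitive root, and $h(w) = (a^{|w|})^n$. For the set $\{\sum_i k_i |w|_{a_i}\} \smallsetminus \{0\}$, given such a positive $n = \sum_i k_i |w|_{a_i}$, I would set $h(a_i) = a^{k_i}$ and $h(c) = b$; some $k_j$ with $a_j \in \alphabet(w)$ is positive since $n > 0$, so $h$ is again nonperiodic, and $h(w) = a^n$ with $a$ primitive. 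Finally, any nonperiodic witness $h_0 \colon \alphabet(w)^* \to \Gamma^*$ for $n \in \pex_{\nonper(\alphabet(w), \Gamma)}(w)$ extends arbitrarily to a nonperiodic morphism $\Sigma^* \to \Gamma^*$ preserving the image of $w$.

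For the reverse inclusion in the $\pex$ claim, suppose $n \in \pex_{\nonper(\Sigma, \Gamma)}(w)$ is witnessed by $h$ and a primitive $r$ with $h(w) = r^n$. I would split on whether the restriction $h|_{\alphabet(w)^*}$ is periodic. If it is nonperiodic, it itself witnesses $n \in \pex_{\nonper(\alphabet(w), \Gamma)}(w)$. If it is periodic, there is a primitive word $s$ and nonnegative integers $k_i$ (for $a_i \in \alphabet(w)$) with $h(a_i) = s^{k_i}$, so $h(w) = s^N$ where $N = \sum_{a_i \in \alphabet(w)} k_i |w|_{a_i}$. The equality $r^n = s^N$ with both $r$ and $s$ primitive then forces $r = s$ and $n = N$, and extending the sum to all of $\Sigma$ by setting $k_i = 0$ for $a_i \notin \alphabet(w)$ (harmless since $|w|_{a_i} = 0$ there) expresses $n$ in the required form.

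The main obstacle is little more than this case analysis: once the restriction is known to be periodic, the global nonperiodicity of $h$ plays no further role, and the value of $n$ is pinned down purely by $h(w) = s^N$ via uniqueness of the primitive root. The rest is a direct adaptation of the construction used in the proof of Theorem~\ref{thm:mor}, with $h(c) = b$ replacing any appeal to a genuinely nonperiodic assignment on $\alphabet(w)$.
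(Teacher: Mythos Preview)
Your proposal is correct and follows essentially the same approach as the paper. The only cosmetic difference is packaging: the paper first observes in one stroke that the images $h(w)$ for $h\in\nonper(\Sigma,\Gamma)$ coincide with the images $g(w)$ for $g\in\mor(\alphabet(w),\Gamma)$ (except $\eps$), by exactly the extend-via-an-extra-letter and restrict-to-$\alphabet(w)$ moves you describe, and then invokes Theorem~\ref{thm:mor} to read off both formulas; you instead inline the content of Theorem~\ref{thm:mor} and carry out the periodic/nonperiodic case split on the restriction directly.
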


\begin{proof}
    The set of images $h(w)$ for $h \in \nonper(\Sigma, \Gamma)$
    is exactly the set of images $\res{h}(w)$
    where $\res{h} \in \mor(\alphabet(w), \Gamma)$
    is a restriction of $h \in \nonper(\Sigma, \Gamma)$.
    The set of these restrictions is actually the whole set $\mor(\alphabet(w), \Gamma)$,
    with the possible exception of the morphism that maps $w$ to $\eps$.
    This is because any $g \in \mor(\alphabet(w), \Gamma)$ such that $g(w) \ne \eps$
    can be extended to a morphism $\ext{g} \in \nonper(\Sigma, \Gamma)$
    by defining $\ext{g}(a)$ for $a \in \Sigma \smallsetminus \alphabet(w)$
    so that it does not commute with $g(b)$ for some $b \in \alphabet(w)$.
    Therefore
    \begin{equation*}
        \pex_{\nonper(\Sigma, \Gamma)}(w) = \pex_{\mor(\alphabet(w), \Gamma)}(w)
        \qquad \text{and} \qquad
        \gex_{\nonper(\Sigma, \Gamma)}(w) = \gex_{\mor(\alphabet(w), \Gamma)}(w).
    \end{equation*}
    The claim now follow from Theorem~\ref{thm:mor}.
\end{proof}

If $w$ is fixed, then the choice of $\Sigma$ in $\inj(\Sigma, \Gamma)$ does not matter,
as long as $w \in \Sigma^+$ of course.
This is proved in the next theorem.
It should be noted, however, that if $w$ is not fixed,
and instead we look at the sets $\pex_{\inj(\Sigma, \Gamma)}(w)$ as $w$ runs over $\Sigma^+$,
then the size of $\Sigma$ can make a difference.

\begin{theorem} \label{thm:inj-sigma}
    Let $w \in \Sigma^+$.
    Then
    \begin{equation*}
        \pex_{\inj(\Sigma, \Gamma)}(w) = \pex_{\inj(\alphabet(w), \Gamma)}(w)
        \qquad \text{and} \qquad
        \gex_{\inj(\Sigma, \Gamma)}(w) = \gex_{\inj(\alphabet(w), \Gamma)}(w).
    \end{equation*}
\end{theorem}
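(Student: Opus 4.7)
The plan is to prove the two inclusions separately. One direction is essentially automatic from restriction, and the other uses the freedom granted by Theorem~\ref{thm:gamma} to enlarge the target alphabet temporarily.

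First, for $\pex_{\inj(\Sigma,\Gamma)}(w) \subseteq \pex_{\inj(\alphabet(w),\Gamma)}(w)$: if $h \in \inj(\Sigma,\Gamma)$ satisfies $h(w) = r^n$ with $r$ primitive, then the restriction of $h$ to $\alphabet(w)^*$ is still injective (being the restriction of an injective morphism to a submonoid) and produces the same image of $w$. Hence it lies in $\inj(\alphabet(w),\Gamma)$ and witnesses $n \in \pex_{\inj(\alphabet(w),\Gamma)}(w)$.

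For the reverse inclusion, I start with $g \in \inj(\alphabet(w),\Gamma)$ such that $g(w) = r^n$ for some primitive $r$, and my goal is to produce some $h \in \inj(\Sigma,\Gamma)$ with $h(w) = r^n$. I would first apply Theorem~\ref{thm:gamma} and replace $\Gamma$ by the larger target alphabet $\Gamma' = \Gamma \cup \{b_a : a \in \Sigma \smallsetminus \alphabet(w)\}$, containing one fresh letter for each new variable. Defining $h: \Sigma^* \to \Gamma'^*$ by $h(a) = g(a)$ for $a \in \alphabet(w)$ and $h(a) = b_a$ otherwise, the morphism $h$ agrees with $g$ on $\alphabet(w)^*$, so $h(w) = g(w) = r^n$, and $r$ remains primitive when viewed in $\Gamma'^*$. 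Injectivity of $h$ is straightforward: in any image $h(u)$, the occurrences of the fresh letters $b_a$ already recover the positions and identities of the new-variable occurrences in $u$, and the factors between them are images under $g$ of words in $\alphabet(w)^*$, whose preimages are determined by the injectivity of $g$. This gives $n \in \pex_{\inj(\Sigma,\Gamma')}(w)$, and Theorem~\ref{thm:gamma} transports this back to $\Gamma$.

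The only real obstacle is extending $g$ injectively, and invoking Theorem~\ref{thm:gamma} sidesteps it: we are free to work in whichever target alphabet is most convenient. If one insisted on a direct construction inside $\Gamma^*$, one would have to enlarge the set $\{g(a) : a \in \alphabet(w)\}$ to a bigger code in $\Gamma^*$, which can be arranged by assigning to each new variable an image of the form $c d^{N+i} c$ for two distinct letters $c, d \in \Gamma$ and $N$ larger than $\max_{a \in \alphabet(w)} |g(a)|$, but verifying the code property is an unnecessary detour. Finally, the corresponding equality for $\gex$ follows immediately from the one for $\pex$ via the general relation~\eqref{eq:gexpex}.
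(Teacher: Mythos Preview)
Your proof is correct and follows essentially the same approach as the paper: restrict for one inclusion, and for the other extend the morphism by sending each new variable to a fresh letter in an enlarged target alphabet, then invoke Theorem~\ref{thm:gamma} to return to $\Gamma$. Your injectivity argument for the extension is slightly more explicit than the paper's, but the construction and the overall strategy are the same.
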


\begin{proof}
    The set of images $h(w)$ for $h \in \inj(\Sigma, \Gamma)$
    is exactly the set of images $\res{h}(w)$
    where $\res{h} \in \mor(\alphabet(w), \Gamma)$ is a restriction of $h$.
    The set of these restrictions is a subset of $\inj(\alphabet(w), \Gamma)$,
    because a restriction of an injective morphism is still injective.
    Therefore
    \begin{equation*}
        \pex_{\inj(\Sigma, \Gamma)}(w) \subseteq \pex_{\inj(\alphabet(w), \Gamma)}(w).
    \end{equation*}

    On the other hand, a morphism in $\inj(\alphabet(w), \Gamma)$
    is not necessarily a restriction of any morphism in $\inj(\Sigma, \Gamma)$,
    but it is a restriction of some morphism in $\inj(\Sigma, \Gamma')$ for some $\Gamma'$:
    For every $a \in \Gamma$, we can define a new letter $a'$,
    and let $\Gamma' = \Gamma \cup \{a' \mid a \in \Gamma\}$,
    and then we can extend every $h \in \inj(\alphabet(w), \Gamma)$
    to a morphism $\ext{h} \in \inj(\Sigma, \Gamma')$
    by defining $\ext{h}(a) = a'$ for all $a \in \Sigma \smallsetminus \alphabet(w)$.
    Thus
    \begin{equation*}
        \pex_{\inj(\alphabet(w), \Gamma)}(w)
        \subseteq \pex_{\inj(\Sigma, \Gamma')}(w)
        = \pex_{\inj(\Sigma, \Gamma)}(w),
    \end{equation*}
    where the last equality follows from Theorem~\ref{thm:gamma}.
    This completes the proof.
\end{proof}

The next two theorems give connections between word equations and mapping words to powers.

\begin{theorem} \label{thm:gex-eq}
    Let $w \in \Sigma^+$, $n \in \Z_+$, and let $x \notin \Sigma$ be a letter.
    Then $n \in \gex_{\nonper(\Sigma, \Gamma)}(w)$ if and only if
    the constant-free word equation $(w, x^n)$ over $(\Sigma \cup \{x\}, \Gamma)$
    has a nonperiodic solution.
\end{theorem}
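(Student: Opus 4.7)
The plan is to prove the two directions of the equivalence separately. The forward direction is a straightforward extension argument: given a nonperiodic morphism $h \in \nonper(\Sigma, \Gamma)$ and a nonempty word $r \in \Gamma^+$ with $h(w) = r^n$, I would extend $h$ to $h'$ on $(\Sigma \cup \{x\})^*$ by setting $h'(x) = r$. Then $h'(w) = h(w) = r^n = h'(x^n)$, so $h'$ solves the equation $(w, x^n)$. The nonperiodicity of $h'$ is inherited directly from $h$, since any pair of letters in $\Sigma$ whose $h$-images do not commute continues to witness the nonperiodicity of $h'$.

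For the backward direction, I would start with a nonperiodic solution $h'$ of $(w, x^n)$, set $r = h'(x)$, and let $h$ denote the restriction of $h'$ to $\Sigma^*$. The main case is $r \ne \eps$. Here $h(w) = h'(w) = h'(x^n) = r^n$, so it suffices to show $h$ is nonperiodic. I would argue by contrapositive: if $h$ were periodic, pick a primitive word $p$ such that every $h(a) \in p^*$; then $r^n = h(w) \in p^*$, and combining $r \ne \eps$ with the uniqueness of primitive roots forces $r \in p^*$ as well. But then every image of $h'$ lies in $p^*$, making $h'$ periodic and contradicting the hypothesis. This gives $n \in \gex_{\nonper(\Sigma, \Gamma)}(w)$.

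The remaining case $r = \eps$ is the potential obstacle. When $h'(x) = \eps$ we have $h'(w) = \eps$, forcing $h'(a) = \eps$ for every $a \in \alphabet(w)$. If we had $\alphabet(w) = \Sigma$, then every $h'(c)$ for $c \in \Sigma \cup \{x\}$ would equal $\eps$, making $h'$ periodic --- a contradiction. Therefore $\alphabet(w)$ is a proper subset of $\Sigma$, and Theorem~\ref{thm:nonper-sigma} immediately yields $\gex_{\nonper(\Sigma, \Gamma)}(w) = \Z_+$, so $n$ belongs to the set in question. The delicate point is thus to notice that the only way the degenerate solution with $h'(x) = \eps$ can be nonperiodic is via surplus letters in $\Sigma \smallsetminus \alphabet(w)$, and that exactly this configuration is already covered by the earlier alphabet-comparison result.
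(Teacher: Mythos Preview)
Your proof is correct and follows essentially the same approach as the paper's: extend $h$ by $x \mapsto r$ for the forward direction, and for the converse restrict the solution to $\Sigma$, argue nonperiodicity via primitive roots, and handle the degenerate $h'(x)=\eps$ case through Theorem~\ref{thm:nonper-sigma}. The only difference is organizational---you split on $r=\eps$ versus $r\ne\eps$ at the outset, whereas the paper first proves the restriction is nonperiodic and then treats $h(x)=\eps$ as an exceptional subcase---but the substance is identical.
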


\begin{proof}
    If $n \in \gex_{\nonper(\Sigma, \Gamma)}(w)$,
    then there exists $h \in \nonper(\Sigma, \Gamma)$
    such that $h(w) = u^n$ for some $u \in \Gamma^+$.
    The morphism $h$ can be extended to $\ext{h} \in \nonper(\Sigma \cup \{x\}, \Gamma)$
    by defining $\ext{h}(x) = u$,
    and then $\ext{h}$ is a nonperiodic solution of the equation $(w, x^n)$.

    On the other hand,
    if $h \in \nonper(\Sigma \cup \{x\}, \Gamma)$
    is a nonperiodic solution of the equation $(w, x^n)$,
    then we can let $\res{h} \in \mor(\Sigma, \Gamma)$ be a restriction of $h$.
    If $\res{h}$ is periodic,
    then there exists a primitive word $r$ such that $\res{h}(a) \in r^*$ for all $a \in \Sigma$,
    and then $h(x)^n = h(w) = \res{h}(w) \in r^*$
    and thus $h(x) \in r^*$,
    meaning that also $h$ is periodic, which is a contradiction.
    Thus it must be $\res{h} \in \nonper(\Sigma, \Gamma)$,
    and from $\res{h}(w) = h(x)^n$
    it follows that $n \in \gex_{\nonper(\Sigma, \Gamma)}(w)$,
    except in the trivial case $h(x) = \eps$.
    But if $h(x) = \eps$, then it must be $h(a) = \eps$ for all $a \in \alphabet(w)$,
    which means that $h$ being nonperiodic
    implies that $\Sigma \smallsetminus \alphabet(w) \ne \varnothing$,
    and then $n \in \gex_\nonper(w)$ by Theorem~\ref{thm:nonper-sigma}.
\end{proof}

\begin{theorem} \label{thm:nonprim-eq}
    Let $w \in \Sigma^+$ and let $x, y \notin \Sigma$ be letters.
    There exists $n \in \gex_{\nonper(\Sigma, \Gamma)}(w)$, $n \geq 2$,
    if and only if
    the system of constant-free word equations $\{(w, x^2 y^3), (xy, yx)\}$
    over $(\Sigma \cup \{x, y\}, \Gamma)$
    has a nonperiodic solution.
\end{theorem}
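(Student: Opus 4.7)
The key observation is that the commutation equation $(xy, yx)$ forces $h(x)$ and $h(y)$ to be powers of a common word. So any solution $h$ of the system must satisfy $h(x) = r^i$ and $h(y) = r^j$ for some $r \in \Gamma^*$ and $i, j \geq 0$, and then the first equation $(w, x^2 y^3)$ reduces to $h(w) = r^{2i + 3j}$. The plan is to exploit the numerical observation that the set $\{2i + 3j \mid i, j \geq 0\}$ equals $\{0\} \cup \{n \in \Z_+ \mid n \geq 2\}$; that is, the system's exponent $2i + 3j$ can realize any integer $n \geq 2$ (and also $0$) but not $1$. This is exactly the reason the numerator-denominator pattern $x^2 y^3$ is chosen.

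For the forward direction, suppose $n \in \gex_{\nonper(\Sigma, \Gamma)}(w)$ with $n \geq 2$, witnessed by a nonperiodic $h \in \nonper(\Sigma, \Gamma)$ with $h(w) = u^n$ for some $u \in \Gamma^+$. I would pick nonnegative integers $i, j$ with $2i + 3j = n$ (easily done case by parity of $n$ for $n \geq 2$) and extend $h$ to $\Sigma \cup \{x, y\}$ by $\ext{h}(x) = u^i$, $\ext{h}(y) = u^j$. A direct check verifies $\ext{h}(w) = u^n = \ext{h}(x)^2 \ext{h}(y)^3$ and $\ext{h}(x)\ext{h}(y) = \ext{h}(y)\ext{h}(x)$. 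The extended morphism is nonperiodic because its restriction to $\Sigma$ is the nonperiodic $h$, and any periodicity of $\ext{h}$ would force a periodicity of the restriction.

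For the converse, I would start from a nonperiodic solution $h$ of the system, use the commutation to write $h(x) = r^i$, $h(y) = r^j$ with $r$ primitive, and set $n = 2i + 3j$. Since $n \ne 1$, either $n \geq 2$ or $n = 0$. In the first case the main task is to show the restriction $\res{h}$ to $\Sigma$ is nonperiodic: assuming it is periodic with some primitive period $s$, combining $h(w) = r^n \in s^*$ with the primitivity of $r$ forces $s$ to be a power of $r$, so every image of $h$ (including $h(x), h(y) \in r^*$) lies in $r^*$, contradicting that $h$ itself is nonperiodic. This gives $\res{h}(w) = r^n$ with $n \geq 2$ and $\res{h}$ nonperiodic. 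In the degenerate case $n = 0$, we have $h(x) = h(y) = \varepsilon$ and $h(a) = \varepsilon$ for all $a \in \alphabet(w)$, so the nonperiodicity of $h$ must come from letters of $\Sigma \smallsetminus \alphabet(w)$, in particular $\alphabet(w) \ne \Sigma$, and then Theorem~\ref{thm:nonper-sigma} gives $\gex_{\nonper(\Sigma, \Gamma)}(w) = \Z_+$, so any $n \geq 2$ lies in it.

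The main obstacle I expect is the reverse direction's nonperiodicity argument, specifically verifying that the restriction of $h$ to $\Sigma$ is still nonperiodic, which requires the small primitive-root lemma sketched above. The corner case $h(x) = h(y) = \varepsilon$ is a secondary technical point, but it is handled cleanly by invoking Theorem~\ref{thm:nonper-sigma}.
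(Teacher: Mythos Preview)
Your proposal is correct and follows essentially the same route as the paper's proof: extend a nonperiodic $h$ with $h(w)=u^n$ by setting $\ext{h}(x)=u^i$, $\ext{h}(y)=u^j$ for $2i+3j=n$ in the forward direction; in the converse, use $(xy,yx)$ to force $h(x),h(y)$ into powers of a common word, argue by contradiction that the restriction $\res{h}$ to $\Sigma$ stays nonperiodic, and handle the degenerate case $h(x)=h(y)=\eps$ via Theorem~\ref{thm:nonper-sigma}. The only cosmetic difference is that you split into the cases $n\geq 2$ versus $n=0$ before running the nonperiodicity argument, whereas the paper runs the argument first and notes the exception afterward; your phrasing ``forces $s$ to be a power of $r$'' would read more precisely as ``forces $s=r$'' since both are primitive, but the logic is sound.
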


\begin{proof}
    If there exists $n \in \gex_{\nonper(\Sigma, \Gamma)}(w)$, $n \geq 2$,
    then $h(w) = u^n$ for some $u \in \Gamma^+$ and $h \in \nonper(\Sigma, \Gamma)$.
    We can write $n = 2i + 3j$ for some nonnegative integers $i, j$.
    The morphism $h$ can be extended to $\ext{h} \in \nonper(\Sigma \cup \{x, y\}, \Gamma)$
    by defining $\ext{h}(x) = u^i$ and $\ext{h}(y) = u^j$,
    and then $\ext{h}(x^2 y^3) = u^n = \ext{h}(w)$ and $\ext{h}(xy) = \ext{h}(yx)$,
    so $\ext{h}$ is a nonperiodic solution of the system.

    On the other hand,
    if $h \in \nonper(\Sigma \cup \{x, y\}, \Gamma)$
    is a nonperiodic solution of the system,
    then $h(x)$ and $h(y)$ commute and are thus powers of some common word $u \in \Gamma^+$,
    say $h(x) = u^i$ and $h(y) = u^j$,
    and then $h(w) = u^{2i + 3j}$.
    Let $\res{h} \in \mor(\Sigma, \Gamma)$ be a restriction of $h$.
    If $\res{h}$ is periodic,
    then there exists a primitive word $r$ such that $\res{h}(a) \in r^*$ for all $a \in \Sigma$,
    and then $u^{2i + 3j} = h(w) = \res{h}(w) \in r^*$
    and thus $u \in r^*$ and $h(x), h(y) \in r^*$,
    meaning that also $h$ is periodic, which is a contradiction.
    Thus it must be $\res{h} \in \nonper(\Sigma, \Gamma)$,
    and from $\res{h}(w) = u^{2i + 3j}$
    it follows that $2i + 3j \in \gex_\nonper(w)$, $2i + 3j \geq 2$,
    except in the trivial case $i = j = 0$.
    But if $i = j = 0$, then it must be $h(a) = \eps$ for all $a \in \alphabet(w)$,
    which means that $h$ being nonperiodic
    implies that $\Sigma \smallsetminus \alphabet(w) \ne \varnothing$,
    and then $\gex_\nonper(w) = \Z_+$ by Theorem~\ref{thm:nonper-sigma}.
\end{proof}

The next theorem proves a claim that was made in the introduction.

\begin{theorem} \label{thm:embed}
    Let $w \in \Sigma^+$.
    Then $\Sigma^*$ can be embedded into a free monoid
    so that $w$ becomes an $n$th power in that monoid
    if and only if
    $n \in \gex_\inj(w)$.
\end{theorem}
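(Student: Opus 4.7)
My plan is to recognize that a free monoid is, up to isomorphism, $\Gamma^*$ for some free generating set $\Gamma$, so that an embedding of $\Sigma^*$ into a free monoid amounts to an injective morphism $\Sigma^* \to \Gamma^*$, and the condition that $w$ is an $n$th power in the target monoid corresponds to $h(w)$ being a literal $n$th power of a word.

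For the direction ($\Leftarrow$), suppose $n \in \gex_\inj(w)$, so there is an injective morphism $h \colon \Sigma^* \to \Gamma^*$ with $h(w) = r^n$ for some $r \in \Gamma^+$. Then $\Gamma^*$ is itself a free monoid and $h$ is the required embedding in which $w$ becomes an $n$th power; nothing further is needed.

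For the direction ($\Rightarrow$), suppose $\phi \colon \Sigma^* \to M$ is an injective morphism into a free monoid $M$, with $\phi(w) = u^n$ for some $u \in M$. Identify $M$ with $X^*$ for a free generating set $X$, and let $Y \subseteq X$ be the finite set consisting of those generators that appear in at least one of the (finitely many) words $\phi(a)$ for $a \in \Sigma$, or in $u$. Then $\phi$ factors as an injective morphism $\Sigma^* \to Y^*$, and the equation $\phi(w) = u^n$ persists in $Y^*$. Since $\Sigma$ is nonunary by the paper's convention, injectivity of $\phi$ rules out $|Y| = 1$, so $Y$ is an admissible alphabet. Hence $n \in \gex_{\inj(\Sigma, Y)}(w)$, and this equals $\gex_\inj(w)$ by Theorem~\ref{thm:gamma}.

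I do not expect any real obstacle: the argument is a direct unpacking of definitions. The only point worth stating carefully, and the only step that uses a previously established result, is the reduction from an a priori arbitrary (possibly infinitely generated) free monoid to the finitely generated $Y^*$, after which the target-alphabet invariance of \textbf{Theorem~\ref{thm:gamma}} closes the loop.
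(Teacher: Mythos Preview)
Your proof is correct and follows essentially the same route as the paper: both identify the free monoid with $\Delta^*$ for some (possibly infinite) generating set, pass to a finite sub-alphabet containing all relevant letters, and read off $n \in \gex_\inj(w)$. The only cosmetic differences are that the paper restricts the domain to $\alphabet(w)$ rather than keeping all of $\Sigma$, and that you make explicit both the appeal to Theorem~\ref{thm:gamma} and the nonunary check on the target alphabet, which the paper leaves implicit.
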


\begin{proof}
    Let $M$ be a free monoid and $g: \Sigma^* \to M$ an embedding
    such that $g(w) = u^n$ for some $u \in M$.
    We can assume that $M = \Delta^*$ for some possibly infinite alphabet $\Delta$.
    There exists a finite set $\Gamma \subseteq \Delta$ such that $u \in \Gamma^+$.
    Let $h \in \inj(\alphabet(w), \Gamma)$ be a restriction of $g$.
    Then $h(w) = u^n$, so $n \in \gex_\inj(w)$.

    Conversely, if $n \in \gex_\inj(w)$,
    then $h(w) = v^n$ for some $\Gamma$, $h \in \inj(\Sigma, \Gamma)$, $v \in \Gamma^+$.
    Here $h$ is an embedding of $\Sigma^*$ into the free monoid $\Gamma^*$
    and $h(w)$ is an $n$th power.
\end{proof}

\section{High powers} \label{sec:high}

In this section, we study the following question: When is $\pex_\inj(w)$ infinite,
and if it is finite, then how large can the largest element be?
By \eqref{eq:gexpex},
$\pex_\inj(w)$ and $\gex_\inj(w)$ are either both infinite
or have the same largest element,
so it does not matter whether we use $\pex_\inj(w)$ or $\gex_\inj(w)$ here.
We concentrate on the case where $w$ is primitive.
Nonprimitive words can then be handled with Theorem~\ref{thm:prim-power}.

In the future, the same questions could be studied also for $\nonper$ in place of $\inj$,
but here we concentrate on injective morphisms
because they have some very nice properties,
such as Theorems~\ref{thm:inj-sigma} and \ref{thm:embed},
and because we can prove strong results about $\pex_\inj(w)$.

First, we prove that primitive words that contain some letter exactly once
can be mapped to arbitrarily high powers by injective morphisms.

\begin{theorem} \label{thm:inj-inf}
    Let $w \in \Sigma^+$ be a primitive word.
    If there exists a letter $a \in \Sigma$ such that $|w|_a = 1$, then
    \begin{equation*}
        \pex_\inj(w) = \Z_+.
    \end{equation*}
\end{theorem}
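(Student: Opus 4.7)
The plan is to exhibit, for every $n \in \Z_+$, a single explicit injective morphism $h$ such that $h(w) = w^n$. Since $w$ is assumed primitive, this immediately gives $n \in \pex_\inj(w)$, and since $n$ was arbitrary the theorem follows. By Theorem~\ref{thm:inj-sigma} we may take the target alphabet to be $\Sigma$ itself, so the construction stays inside $\Sigma^*$.

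First I would use the hypothesis $|w|_a = 1$ to factor $w = uav$ uniquely, where $u, v \in (\Sigma \smallsetminus \{a\})^*$. The construction is then: define $h(b) = b$ for every $b \in \Sigma \smallsetminus \{a\}$ and
\begin{equation*}
    h(a) = a(vua)^{n-1}.
\end{equation*}
A direct computation, using the standard identity $(xy)^n = x(yx)^{n-1}y$ with $x = ua$ and $y = v$, gives
\begin{equation*}
    h(w) = u \cdot a(vua)^{n-1} \cdot v = ua(vua)^{n-1}v = (uav)^n = w^n,
\end{equation*}
which is the required $n$th power of a primitive word.

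What remains is the injectivity of $h$, and this is where one might expect the main difficulty to lie, but in fact the choice of $h(a)$ makes this essentially automatic. The image set $\{h(b) : b \in \Sigma\}$ consists of the single-letter codewords $\{b : b \ne a\}$ together with the word $h(a) = a(vua)^{n-1}$, which begins with the letter $a$. Since no other codeword begins with $a$ and $|h(a)| \geq 1$, no codeword is a proper prefix of another. Hence the image set is a prefix code, so $h$ is injective, and $h \in \inj(\Sigma, \Sigma)$ as required. Combined with $h(w) = w^n$ and the primitivity of $w$, this yields $n \in \pex_\inj(w)$, concluding the argument.
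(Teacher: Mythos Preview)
Your proof is correct and follows exactly the same construction as the paper: the factorization $w = uav$, the morphism $h(a) = a(vua)^{n-1}$, $h(b) = b$ otherwise, and the computation $h(w) = (uav)^n$. The only difference is that you spell out the injectivity argument via the prefix-code observation, whereas the paper simply asserts that $h$ is injective.
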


\begin{proof}
    We can write $w = uav$, where $u, v \in (\Sigma \smallsetminus \{a\})^*$.
    For all $n \in \Z_+$,
    we can define a morphism $h: \Sigma^* \to \Sigma^*$ by $h(a) = a(vua)^{n - 1}$
    and $h(b) = b$ for all letters $b \in \Sigma \smallsetminus \{a\}$.
    Then $h$ is injective and
    \begin{equation*}
        h(w) = u a(vua)^{n - 1} v = (uav)^n,
    \end{equation*}
    so $n \in \pex_\inj(w)$.
\end{proof}

For all other primitive words,
we get an upper bound that is linear with respect to the length of the word.

\begin{theorem} \label{thm:inj-ub}
    Let $w \in \Sigma^+$ be a primitive word.
    If $|w|_a \ne 1$ for all $a \in \Sigma$, then
    \begin{equation*}
        \max(\pex_\inj(w)) < |w|.
    \end{equation*}
\end{theorem}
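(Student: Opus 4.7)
The plan is to argue by contradiction. Assume $h \in \inj$, $r$ is primitive, and $h(w) = r^n$ with $n \geq L := |w|$. By Theorem~\ref{thm:inj-sigma} we may take $\Sigma = \alphabet(w)$, so that every letter of $\Sigma$ occurs at least twice in $w$. Write $w = a_1 \cdots a_L$, set $R := |r|$, and $c_i := |h(a_1 \cdots a_i)|$, so that the cuts $0 = c_0 < c_1 < \cdots < c_L = nR$ partition $r^n$ into the pieces $h(a_i)$.

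For each block boundary $kR$, $k \in \{0, 1, \ldots, n\}$, I would locate the unique index $j_k$ and offset $d_k \in [0, |h(a_{j_k})|)$ with $c_{j_k - 1} + d_k = kR$, and project to the letter--offset pair $E_k := (a_{j_k}, d_k)$, which lives in a set of size $\sum_{a \in \Sigma} |h(a)|$. The length identity $nR = \sum_a |w|_a |h(a)| \geq 2 \sum_a |h(a)|$ combined with $n \geq L$, together with a separate treatment of the boundary cases where some $kR$ coincides with a cut, should force by pigeonhole a collision $E_k = E_{k'}$ with $k < k'$. Such a collision means that $kR$ and $k'R$ sit at the same offset inside two occurrences of the same image $h(a)$; aligning the suffixes of $r^n$ starting at those two positions and using the $R$-periodicity of $r^n$ then yields $h(a_{j_k} \cdots a_{j_{k'} - 1}) = r_\phi^{k' - k}$, where $r_\phi$ is the rotation of $r$ to phase $-d_k \bmod R$.

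From this equality $h(u) = r_\phi^{k'-k}$, where $u := a_{j_k} \cdots a_{j_{k'} - 1}$ is a proper factor of $w$ that both starts with and is immediately followed in $w$ by the letter $a$, I would extract a period of $w$ shorter than $L$ and thereby contradict the primitivity of $w$. The hard part will be precisely this last step. The rotation $r_\phi$ is typically not in the submonoid $h(\Sigma^*)$, so injectivity cannot be invoked directly on $r_\phi^{k'-k}$; instead, one must use injectivity together with the primitivity of $r$ (and possibly Theorem~\ref{thm:morph-prim} applied to the restriction of $h$ to the two letters actually occurring in $u$) to show that $u$ itself is a nontrivial power of a shorter word, and then to glue this repetition with the subsequent occurrence $a_{j_{k'}} = a$ to obtain a genuine period of $w$ of length less than $|w|$. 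The hypothesis $|w|_a \neq 1$ is essential precisely here: if some letter occurred exactly once in $w$, the potential repetition could hide inside that letter's $h$-image without producing a period of $w$, which is the loophole already exploited in the construction of Theorem~\ref{thm:inj-inf}.
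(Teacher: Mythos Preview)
Your pigeonhole step does not go through. The pairs $E_k = (a_{j_k}, d_k)$ live in a set of size $\sum_{a \in \Sigma} |h(a)|$, and from $nR = \sum_a |w|_a\,|h(a)| \geq 2\sum_a |h(a)|$ you only get $\sum_a |h(a)| \leq nR/2$. That bound is useless for pigeonholing $n+1$ objects unless $R \leq 2$. What \emph{does} follow from $n \geq L$ is the much weaker statement that two block boundaries fall into the same slot $[c_{j-1}, c_j)$, i.e.\ $j_k = j_{k'}$; but then the offsets $d_k, d_{k'}$ need not coincide, and all you learn is that $|h(a_{j_k})| > R$. That is exactly the starting point of the paper's argument, not the collision $E_k = E_{k'}$ you want.

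Even granting a collision, your final step is not a plan but a hope. From $h(u) = r_\phi^{k'-k}$ for a proper factor $u$ of $w$ you cannot conclude that $u$ is a nontrivial power (take $u$ binary with $|u|_a = 1$: Theorem~\ref{thm:morph-prim} gives nothing), and even if $u$ were a power that would not produce a period of $w$. The paper's proof avoids this entirely: once some letter $a$ has $|h(a)| \geq |r|$, pass to a conjugate $asat$ of $w$, write $h(asat) = v^n$ with $v$ a conjugate of $r$, and observe that $h(a)$ must begin with two different rotations of $v$, forcing those rotations to be equal. This yields $h(as), h(at) \in v^*$ and hence $h(asat) = h(atas)$, while primitivity of $w$ gives $asat \neq atas$; so $h$ is not injective. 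No pigeonhole on offsets, no period of $w$, no appeal to Theorem~\ref{thm:morph-prim}.
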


\begin{proof}
    We assume that for some morphism $h$,
    $h(w) = u^n$, where $u$ is primitive and $n \geq |w|$,
    and show that $h$ is not injective.
    If $|h(a)| < |u|$ for all $a \in \alphabet(w)$,
    then
    \begin{equation*}
        |h(w)| < |w||u| \leq n|u| = |u^n|,
    \end{equation*}
    which is a contradiction.
    Thus there necessarily exists $a \in \alphabet(w)$ such that $|h(a)| \geq |u|$.
    It must be $|w|_a \geq 2$,
    so some conjugate of $w$ can be written as $asat$, where $s, t \in \Sigma^*$.
    Then $h(asat) = v^n$ for some conjugate $v$ of $u$.
    We can write $h(as) = v^k  p$, $h(at) = q v^{n - k - 1}$, where $v = pq$.
    Because $|h(a)| \geq |u| = |v|$, $h(a)$ begins with $pq$ and also with $qp$,
    so $pq = qp$ and thus $p$ and $q$ are powers of a common word.
    By the primitivity of $v$, either $p = v$, $q = \eps$ or $p = \eps$, $q = v$.
    Thus $h(as), h(at) \in v^*$.
    It follows that $h(asat) = h(atas)$.
    Because $asat$ is primitive, it cannot be an internal factor of $(asat)^2$,
    but $atas$ is an internal factor of $(asat)^2$, so it must be $asat \ne atas$.
    Therefore $h$ is not injective.
\end{proof}

To summarize, Theorems~\ref{thm:inj-inf} and \ref{thm:inj-ub} show that
$\pex_\inj(w)$ is infinite
if and only if
some letter occurs in the primitive root of $w$ exactly once,
and for all other words $w$, $\max(\pex_\inj(w)) < |w|$.
The next theorem shows that if the size of the alphabet $\Sigma$ is not bounded,
then the upper bound in Theorem~\ref{thm:inj-ub} is optimal up to a constant factor.

\begin{theorem} \label{thm:inj-lb}
    Let $n$ be even and let $\Sigma$ be an $n$-ary alphabet.
    There exists a primitive word $w \in \Sigma^{2n}$
    such that $|w|_c = 2$ for all $c \in \Sigma$ and
    \begin{equation*}
        \max(\pex_\inj(w)) \geq \frac{|w|}{2} - 1.
    \end{equation*}
\end{theorem}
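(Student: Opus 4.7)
The plan is constructive: exhibit, for each even $n \geq 2$, an explicit primitive word $w \in \Sigma^{2n}$ in which each letter of $\Sigma$ occurs exactly twice, together with an injective morphism $h$ satisfying $h(w) = r^{n-1}$ for some primitive word $r$. By Theorem~\ref{thm:gamma} the target alphabet of $h$ is irrelevant, so it may be chosen as large as needed. Given such $w$, $h$, and $r$, the conclusion follows at once, since then $n-1 \in \pex_\inj(w)$ and $n - 1 = |w|/2 - 1$.

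I would first fix the length $|r|$ of the target primitive word and the tuple of lengths $(\ell_a)_{a \in \Sigma}$ of the images $h(a)$, subject to the length equation $2 \sum_{a \in \Sigma} \ell_a = (n-1)|r|$; since $n-1$ is odd, this already forces $|r|$ to be even. Next I would describe the letter sequence of $w$ so that the two occurrences of each $a \in \Sigma$ correspond to positions in $r^{n-1}$ whose starting indices are congruent modulo $|r|$. Once this is arranged, defining $h(a)$ to be the length-$\ell_a$ substring of $r^\infty$ beginning at that common residue class yields a well-defined morphism with $h(w) = r^{n-1}$. Finally, I would verify the remaining properties: $w$ is primitive (by ruling out each proper divisor of $2n$ as a period), $h$ is injective (by arranging that distinct letters use distinct (length, starting residue) pairs, so that their images are distinct substrings of $r^\infty$), and $r$ is primitive (by choice).

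The main obstacle lies in the second step. The placement of letters in $w$ can be encoded as a closed walk of $2n$ steps on $\Z/|r|\Z$ starting and ending at a fixed residue, with the step at position $i$ equal to $\ell_{x_i}$; the condition that both occurrences of each letter share a starting residue translates into an Eulerian-type balance requirement at every node of the walk, namely that each (residue, step size) pair used by a letter is visited an even number of times. One must choose $|r|$ and the length distribution so that this balance condition admits a valid walk, while simultaneously leaving room for $n$ pairwise distinct (length, residue) pairs (for injectivity) and avoiding any nontrivial period in $w$ (for primitivity). Once suitable parameters are in hand and the word $w$ is written down, the verifications reduce to routine checks, but finding parameters and a placement that satisfy all three demands simultaneously is where the work lies.
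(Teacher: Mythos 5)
Your proposal is a plan rather than a proof: the entire content of the theorem is the explicit construction, and you stop exactly at the point where it is needed. You correctly reduce the statement to exhibiting a primitive $w$ with every letter occurring twice and an injective $h$ with $h(w)=r^{\,n-1}$, and your residue-class framework (place the two occurrences of each letter at positions congruent modulo $|r|$ and let $h(a)$ be the corresponding factor of $r^\infty$) is a reasonable way to search for such a pair, but you acknowledge yourself that you have not found parameters, a letter placement, or the word $w$, and you have not verified primitivity or injectivity for any concrete candidate. As it stands there is no argument that the required balance conditions are simultaneously satisfiable, so the theorem is not proved. For comparison, the paper simply writes down $w = a_1^2 a_2^2 \cdots a_{n-2}^2\, a_{n-1} a_n^2 a_{n-1}$ and the morphism $h(a_i)=a^{2n-2i-2} b\, a^{2i}$ for $i \le n-2$, $h(a_{n-1})=b$, $h(a_n)=a^{n-1}$, computes $h(w)=(a^{2n-4} b a^{2n-2} b)^{n-1}$, and proves injectivity directly.

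There is also a substantive error in your injectivity step: arranging that distinct letters receive distinct $(\text{length},\text{residue})$ pairs only guarantees that the images $h(a)$ are pairwise distinct words, which is far from sufficient for $h$ to be an injective morphism (the images must form a code). For instance $h(a)=a$, $h(b)=ab$, $h(c)=ba$ has pairwise distinct images but $h(ac)=h(ba)$. Any completion of your approach would need a genuine code argument; in the paper this is done by a synchronization/parity argument, where the evenness of $n$ is used to make the exponent $n-2i-1$ odd while all competing exponents of $a$ preceding a $b$ are even, so that two images cannot begin to agree in the wrong alignment. Your proposal never identifies where the hypothesis that $n$ is even enters, which is a further sign that the decisive part of the argument is missing.
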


\begin{proof}
    Let $\Sigma = \{a_1, \dots, a_n\}$ and
    \begin{equation*}
        w = \Big(\prod_{i = 1}^{n - 2} a_i^2\Big) \cdot a_{n - 1} a_n^2 a_{n - 1}.
    \end{equation*}
    We define a morphism $h: \Sigma^* \to \{a, b\}^*$ by
    \begin{equation*}
        h(a_i) = a^{2n - 2i - 2} b a^{2i} \text{ for $i \in \{1, \dots, n - 2\}$},\qquad
        h(a_{n - 1}) = b,\qquad
        h(a_n) = a^{n - 1}.
    \end{equation*}
    Then
    \begin{align*}
        h(w)
        &= \Big(\prod_{i = 1}^{n - 2} a^{2n - 2i - 2} b a^{2n - 2} b a^{2i}\Big)
            \cdot b a^{2n - 2} b\\
        &= \Big(\prod_{i = 1}^{n - 2} a^{2i - 2} a^{2n - 2i - 2} b a^{2n - 2} b\Big)
            \cdot a^{2n - 4} b a^{2n - 2} b
        = (a^{2n - 4} b a^{2n - 2} b)^{n - 1}.
    \end{align*}
    To show that $h$ is injective,
    we assume that $h(a_i u) = h(a_j v)$ for some indices $i < j$ and words $u, v$
    and derive a contradiction.
    It must be $1 \leq i \leq n - 2$ and $j = n$,
    because otherwise the first $b$ in $h(a_i u)$ and the first $b$ in $h(a_j v)$
    would be in a different position.
    Then we get
    \begin{equation} \label{eq:inj-lb1}
        a^{n - 2i - 1} b a^{2i} h(u) = h(v)
    \end{equation}
    by cancelling $a^{n - 1}$ from the beginning of both sides of $h(a_i u) = h(a_j v)$.
    Let $a_k$ be the first letter of $v$.
    If $k = n$,
    then the left-hand side of \eqref{eq:inj-lb1} begins with $a^{n - 2i - 1} b$
    and the right-hand side with $a^{n - 1}$,
    which is a contradiction because $n - 2i - 1 < n - 1$.
    But if $k \ne n$,
    then the left-hand side of \eqref{eq:inj-lb1} begins with $a^{n - 2i - 1} b$
    and the right-hand side with $a^m b$ for some even $m$,
    which is a contradiction because $n - 2i - 1$ is odd.
    This shows that $h$ is injective.
    Thus $n - 1 \in \pex_\inj(w)$.
\end{proof}

\begin{example}
    We illustrate Theorem~\ref{thm:inj-lb} and its proof in the case $n = 4$.
    For convenience, let us denote the letters $a_1, a_2, a_3, a_4$ by $a, b, c, d$, respectively.
    The word $w$ in the proof is $w = a^2 b^2 c d^2 c$
    and the morphism $h$ is defined by
    \begin{equation*}
        h(a) = a^4 b a^2,\qquad
        h(b) = a^2 b a^4,\qquad
        h(c) = b,\qquad
        h(d) = a^3.
    \end{equation*}
    We have
    \begin{equation*}
        h(w)
        = a^4 b a^2 \cdot a^4 b a^2
            \cdot a^2 b a^4 \cdot a^2 b a^4
            \cdot b \cdot a^3 \cdot a^3 \cdot b
        = a^4 b a^6 b a^4 b a^6 b a^4 b a^6 b
        = (a^4 b a^6 b)^3.
    \end{equation*}
\end{example}

In Theorem~\ref{thm:inj-lb},
we needed increasingly large alphabets
to give examples of words that can be mapped to increasingly high powers.
This leads to the question of whether we can do the same using an alphabet of fixed size.
More specifically,
does there exist an alphabet $\Sigma$ and a sequence of primitive words $w_1, w_2, w_3, \dotsc$
such that $|w_i|_a \ne 1$ for all $i$ and all $a \in \Sigma$
and $\lim_{i \to \infty} \max(\pex_\inj(w_i)) = \infty$?
Or can we for every alphabet $\Sigma$ find a bound $C$
such that for all primitive $w \in \Sigma^+$
either $\max(\pex_\inj(w)) \leq C$ or $\pex_\inj(w)$ is infinite?
In the case of binary alphabets, we can give such a bound $C$, and actually $C = 1$,
meaning that binary primitive words can only be mapped to primitive words,
except for the trivial cases covered by Theorem~\ref{thm:inj-inf}.
This follows from Theorem~\ref{thm:morph-prim}.
The case of ternary and larger alphabets remains open.
Results about these larger alphabets
could possibly be viewed as partial generalizations of Theorem~\ref{thm:morph-prim}.

\begin{theorem}
    Let $w \in \{a, b\}^+$ be a primitive word.
    If $|w|_a, |w|_b > 1$, then
    \begin{equation*}
        \pex_\inj(w) = \{1\}.
    \end{equation*}
\end{theorem}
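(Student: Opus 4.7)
The plan is to invoke Theorem~\ref{thm:morph-prim} directly, after a short preparation that puts $w$ and any candidate injective morphism into the hypotheses of that theorem.

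First I would observe that every injective morphism $h : \{a,b\}^* \to \Gamma^*$ is automatically nonperiodic: if $h$ were periodic then $h(a)$ and $h(b)$ would commute, hence $h(ab) = h(ba)$, contradicting injectivity (since $ab \ne ba$). Next I would note that the hypothesis $|w|_a, |w|_b > 1$ forces $|w| \geq 4$, so in particular $w \in \{a,b\}^+ \smallsetminus \{a,b\}$.

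Now suppose $n \in \pex_\inj(w)$, so $h(w) = r^n$ for some injective morphism $h$ and primitive $r$. Assume for contradiction that $n \geq 2$, so that $h(w)$ is nonprimitive. Then $w$ lies in the set $S$ of Theorem~\ref{thm:morph-prim} applied to the nonperiodic morphism $h$. That theorem says $S$ is either empty or a single conjugacy class of a primitive word $w_0$ with $|w_0|_a = 1$ or $|w_0|_b = 1$. Since $w \in S$, the set is nonempty, and $w$ is a conjugate of such a $w_0$. But conjugates preserve the number of occurrences of each letter, so $|w|_a = |w_0|_a$ and $|w|_b = |w_0|_b$, contradicting $|w|_a, |w|_b > 1$. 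Hence $n = 1$, and so $\pex_\inj(w) = \{1\}$.

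There is essentially no obstacle here beyond correctly lining up the hypotheses of Theorem~\ref{thm:morph-prim}: the only subtleties are checking $w \ne a, b$ (handled by the length bound) and that $h$ is nonperiodic (handled by the commutativity argument). The actual combinatorial work is done entirely by the Spehner / Barbin--Le Rest--Le Rest theorem already cited in the excerpt.
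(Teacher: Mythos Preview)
Your proposal is correct and follows exactly the route the paper takes: the paper's proof is the single line ``Follows from Theorem~\ref{thm:morph-prim},'' and you have simply spelled out the details of how that theorem applies (nonperiodicity of injective binary morphisms, $w\notin\{a,b\}$, and the conjugacy argument on letter counts). The only tiny omission is noting that $1\in\pex_\inj(w)$ via the identity morphism, but this is immediate from primitivity of $w$.
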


\begin{proof}
    Follows from Theorem~\ref{thm:morph-prim}.
\end{proof}

As was mentioned at the beginning of the section,
the same questions we have studied for injective morphisms
could be studied also for nonperiodic morphisms.
The next example shows that
the set of words that can be mapped to arbitrarily high powers by nonperiodic morphisms
is larger and probably more complicated than
the set of words that can be mapped to arbitrarily high powers by injective morphisms.

\begin{example}
    Let $\Sigma = \{a, b, c\}$ and consider the word $w = abbacc$.
    For all $n$, we can define a morphism $h: \Sigma^* \to \Sigma^*$ by
    $h(a) = (abb)^{n - 1} a$, $h(b) = h(c) = b$.
    Then $h(w) = (abb)^{2n}$.
    Thus $2n \in \pex_\nonper(w)$.
\end{example}

\section{Algorithmic questions} \label{sec:algo}

If $A$ and $B$ are decision problems and $A$ is polynomial-time reducible to $B$,
we use the notation $A \leq_p B$.
If $A$ and $B$ are polynomially equivalent, that is, $A \leq_p B$ and $B \leq_p A$,
then we use the notation $A \equiv_p B$.

We define the following decision problems:
\begin{itemize}
    \item
    $\sat$:
    Given an alphabet $\Sigma$ and a finite system of word equations with constants over $\Sigma$,
    decide whether the system has a solution.
    \item
    $\satcf$:
    Given an alphabet $\Sigma$ and a finite system of constant-free word equations over $\Sigma$,
    decide whether the system has a nonperiodic solution.
    \item
    $\pow(n)$:
    Given an alphabet $\Sigma$ and a word $w \in \Sigma^+$,
    decide whether $n \in \gex_\nonper(w)$.
    In other words,
    decide whether $w$ can be mapped to an $n$th power by a nonperiodic morphism.
    \item
    $\nonprim$:
    Given an alphabet $\Sigma$ and a word $w \in \Sigma^+$,
    decide whether there exists $n \geq 2$ such that $ n \in \gex_\nonper(w)$.
    In other words,
    decide whether $w$ can be mapped to a nonprimitive word by a nonperiodic morphism.
\end{itemize}

In this section, we show that for all $n \geq 2$, $\pow(n) \equiv_p \satcf \equiv_p \nonprim$,
but first we need to prove a lemma and two theorems about word equations.

\begin{lemma} \label{lem:xy}
    Let $\Sigma = \{x_1, \dots, x_n\}$ and
    \begin{equation*}
        X = \prod_{i = 1}^n \prod_{j = 1}^n x_i x_j,
        \qquad
        Y = \prod_{i = 1}^n \prod_{j = 1}^n x_j x_i.
    \end{equation*}
    Let $h: \Sigma^* \to \Gamma^*$ be a morphism.
    Let $n \geq 2$ be an integer.
    The following are equivalent:
    \begin{itemize}
        \item
        $h$ is nonperiodic.
        \item
        $h(X) \ne h(Y)$.
        \item
        $h(XY) \ne h(YX)$.
        \item
        $h(X^n Y^n)$ is primitive.
    \end{itemize}
\end{lemma}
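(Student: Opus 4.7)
Our plan is to establish the cycle $(1) \Rightarrow (2) \Rightarrow (3) \Rightarrow (4) \Rightarrow (1)$ of implications, where we refer to the four conditions by their order in the lemma. Three of these implications are short. Since $|X|_{x_k} = |Y|_{x_k} = 2n$ for every $k$, we have $|h(X)| = |h(Y)|$, and two commuting words of equal length must be equal, so condition $(2)$ immediately yields $(3)$. For $(3) \Rightarrow (4)$, we introduce $g : \{y, z\}^* \to \Gamma^*$ by $g(y) = h(X)$, $g(z) = h(Y)$; condition $(3)$ forces $g(y)$ and $g(z)$ to be non-commuting, so $g$ is nonperiodic and Theorem~\ref{thm:morph-prim} applies. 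The word $y^n z^n$ is primitive, differs from $y$ and $z$, and has $|y^n z^n|_y = |y^n z^n|_z = n \ge 2$, a property preserved under conjugation; hence $y^n z^n$ lies outside the exceptional set $S$, so $g(y^n z^n) = h(X^n Y^n)$ is primitive. Finally $(4) \Rightarrow (1)$: if $h$ were periodic then $h(X^n Y^n)$ would be either empty or a proper power of a common primitive root, in either case not primitive.

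The main obstacle is $(1) \Rightarrow (2)$: showing that $h(X) = h(Y)$ forces $h$ to be periodic. Writing $a_i = h(x_i)$ for brevity, we have $h(X) = \alpha_1 \cdots \alpha_n$ and $h(Y) = \beta_1 \cdots \beta_n$ with $\alpha_i = \prod_{j=1}^n a_i a_j$ and $\beta_i = \prod_{j=1}^n a_j a_i$. Since $|\alpha_i| = |\beta_i|$ for every $i$, the equation $h(X) = h(Y)$ splits into the $n$ equalities $\alpha_i = \beta_i$. A direct expansion shows $\alpha_i = a_i e_i$ and $\beta_i = e_i a_i$, where $e_i = a_1 a_i a_2 a_i \cdots a_{n-1} a_i a_n$; hence $a_i$ commutes with $e_i$ for each $i$.

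Assuming some $a_i \ne \eps$ (otherwise $h$ is trivially periodic), we pick such an $i$ and let $r$ be the primitive root of $a_i$, so that $a_i, e_i \in r^*$. Writing $e_i = r^m$, each of the $n - 1$ internal occurrences of $a_i$ in $e_i$ is a copy of a positive power of the primitive word $r$ sitting inside $r^m$, and therefore must start at a position $\equiv 1 \pmod{|r|}$ (using that $r$ is not an internal factor of $r^2$). A straightforward induction on the positions of these occurrences then shows that $|a_j|$ is a multiple of $|r|$ for every $j$, and consequently each $a_j$ occurs as an $r$-aligned factor of $e_i = r^m$, giving $a_j \in r^*$. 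Thus $h$ is periodic, closing the cycle. The only subtle step is this positional induction, which rests on the primitivity of $r$ and requires a little care to recover $a_n$ from the divisibility $|r| \mid |e_i|$; the rest is routine bookkeeping.
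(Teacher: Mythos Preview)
Your proof is correct, but it takes a noticeably longer route than the paper's, and the extra work comes from missing one simple observation.

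For the implication $(1)\Rightarrow(2)$ (equivalently, $h(X)=h(Y)\Rightarrow h$ periodic), you decompose $h(X)$ and $h(Y)$ into the $n$ blocks $\alpha_i,\beta_i$, obtain $a_i e_i = e_i a_i$, and then run an induction on positions using primitivity of the root $r$ of $a_i$ to force every $a_j\in r^*$. This works, but the paper instead uses the finer decomposition into $n^2$ blocks: the $(i,j)$-th pair in $X$ is $x_i x_j$ and in $Y$ is $x_j x_i$, and $|h(x_i x_j)|=|h(x_j x_i)|$. Hence $h(X)=h(Y)$ is equivalent to $h(x_i)h(x_j)=h(x_j)h(x_i)$ for all $i,j$, which is exactly periodicity. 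This collapses your entire positional argument to a single line.

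For $(3)\Rightarrow(4)$ you invoke Theorem~\ref{thm:morph-prim} on the auxiliary morphism $g$, which is valid. The paper instead goes directly from nonperiodicity (via $h(X)\ne h(Y)$, equal length, hence non-commuting) to primitivity of $h(X)^n h(Y)^n$ by the theorem of Lyndon and Sch\"utzenberger, which is a lighter tool and suffices here. Your use of Theorem~\ref{thm:morph-prim} is not wrong, just heavier than needed.

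In summary: your cycle $(1)\Rightarrow(2)\Rightarrow(3)\Rightarrow(4)\Rightarrow(1)$ is sound, but the paper's argument is essentially two lines in each direction because of the $n^2$-block alignment. It is worth internalising that trick, since the words $X,Y$ were designed precisely so that it works.
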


\begin{proof}
    If $h$ is periodic,
    then $h(x_i x_j) = h(x_j x_i)$ for all $i, j$,
    and thus $h(X) = h(Y)$ and $h(XY) = h(YX)$.
    Then $h(X^n Y^n) = h(X)^{2n}$ is not primitive.

    If $h$ is nonperiodic,
    then $h(x_i x_j) \ne h(x_j x_i)$ for some $i, j$,
    and thus $h(X) \ne h(Y)$ and $h(XY) \ne h(YX)$.
    Then $h(X^n Y^n)$ is primitive by the theorem of Lyndon and Sch\"utzenberger.
\end{proof}

For every finite system of constant-free word equations,
we can find a constant-free equation that has exactly the same nonperiodic solutions as the system,
as proved by Hmelevskii~\cite{hm71}.
We extend this result in two ways:
The resulting equation can be assumed to be \emph{balanced},
that is, of the form $(u, v)$ with $|u|_x = |v|_x$ for all variables $x$,
and it can be constructed in polynomial time.
This could be easily proved based on the original construction in~\cite{hm71},
but because the reference~\cite{hm71} might be hard to find and hard to read,
we give a self-contained proof with a slightly different construction.

\begin{theorem} \label{thm:balanced}
    For every finite system of constant-free equations,
    we can construct in polynomial time a balanced constant-free equation
    that has exactly the same set of nonperiodic solutions.
\end{theorem}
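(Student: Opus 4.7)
The plan is to build the balanced equation in two steps, using Lemma~\ref{lem:xy} as the main tool. Fix $n \geq 2$ and let $W = X^n Y^n$, so that $h(W)$ is primitive for every nonperiodic $h$. This anchor word will be inserted throughout the construction so that alignment is forced by Fine--Wilf combined with the fact that a primitive word is not an internal factor of its square.

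Step one is to collapse the system $\{(u_i, v_i) : 1 \leq i \leq k\}$ into a single equation preserving the set of nonperiodic solutions. I form
\[
(A, B) = \bigl( u_1 W^{N_1} u_2 W^{N_2} \cdots W^{N_{k-1}} u_k,\ v_1 W^{N_1} v_2 W^{N_2} \cdots W^{N_{k-1}} v_k \bigr)
\]
with exponents $N_i$ chosen polynomially large in the size of the system (say strictly increasing and exceeding $|u_j v_j|$ for all $j$). The direction ``system $\Rightarrow (A,B)$'' is immediate. For the converse, a nonperiodic solution $h$ of $(A,B)$ makes $h(W^{N_i})$ a long power of the primitive word $h(W)$; Fine--Wilf forces the anchor blocks on the two sides of $h(A)=h(B)$ to align, and after stripping them one obtains $h(u_i) = h(v_i)$ for each $i$.

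Step two is to balance a given equation $(A, B)$ while preserving nonperiodic solutions. I pad both sides symmetrically using the anchor; a natural first candidate is
\[
(A W^M B,\ B W^M A),
\]
which is manifestly balanced since $|A W^M B|_x = |A|_x + M |W|_x + |B|_x = |B W^M A|_x$ for every variable $x$. The easy direction is again trivial. The converse reduces to analysing the word equation $p P q = q P p$ where $P = h(W)^M$ is a long primitive power, $p = h(A)$, $q = h(B)$. Using Fine--Wilf and primitivity of $h(W)$, one shows that the occurrences of $P$ on the two sides must align; however, this symmetric form a priori only forces $p$ and $q$ to commute with a common conjugate of $P$, which is weaker than $p = q$. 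To upgrade commutation to equality I would break the swap symmetry with an extra asymmetric anchor arrangement, for instance appending a second symmetric block such as $A^2 W^{M+1} B$ against $A W^{M+1} B^2$ and tuning the exponents so that the Parikh counts still agree; prefix comparison then forces $p = q$ via rigidity of the primitive anchor.

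The main obstacle is thus the balancing step, specifically the tension between syntactic balance (which naturally calls for a swap-like symmetry and therefore encodes only commutation) and the semantic equality $h(A) = h(B)$ that we actually need. Resolving this requires a carefully chosen wrapping that remains balanced on the Parikh level while forcing equality on the level of images via Fine--Wilf. The remaining ingredients -- verifying rigidity of the anchors in step one, dealing with degenerate cases such as some $h(u_i)$ being empty or $h(W)$ happening to occur inside some $h(u_j)$, and ensuring that the exponents $N_i$ and $M$ remain polynomial in the size of the input -- are standard but require bookkeeping.
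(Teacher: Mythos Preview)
Your diagnosis of the difficulty is accurate, but the gap you name in Step~2 is not closed by the asymmetric padding you suggest: the block $A^2 W^{M+1} B$ against $A W^{M+1} B^2$ has Parikh counts $2|A|_x + |B|_x + \cdots$ versus $|A|_x + 2|B|_x + \cdots$, and these agree only if $(A,B)$ was already balanced --- which it need not be after Step~1. No ``tuning of exponents'' on the $W$-blocks can repair this, since $|W|_x$ is the same for all $x$. Step~1 has an analogous and more hidden problem: raising a \emph{fixed} $W = X^nY^n$ to a high power $N_i$ makes $h(W^{N_i})$ long but leaves its primitive period equal to $|h(W)|$, so Fine--Wilf only forces the anchor occurrences on the two sides to align modulo $|h(W)|$. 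Nothing you wrote rules out $h(v_1) = h(u_1)\,h(W)^c$ with $c \neq 0$; for instance, if $h(A) = h(B)$ and both $h(A)$ and $h(B)$ happen to be powers of $h(W)$, all alignment arguments are vacuous.

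The paper removes both obstacles with one idea: instead of taking a high power of a short primitive anchor, make the primitive anchor itself long. Let $N$ be the total length of the system and set $W = X^N Y^N$; then every variable occurs in $X^N$ strictly more often than in any single $u_i$ or $v_i$, so $|h(u_i)|,\,|h(v_i)| < |h(W)|$ for every nonperiodic $h$. Now perform the swap \emph{per equation, in one shot}: the single equation
\[
\Bigl(\prod_{i=1}^k u_i\,W^2\,v_i,\ \prod_{i=1}^k v_i\,W^2\,u_i\Bigr)
\]
is balanced block by block, so for any solution the $i$th blocks have equal length and hence $h(u_i)\,h(W)^2\,h(v_i) = h(v_i)\,h(W)^2\,h(u_i)$ for each $i$. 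Because both flanking words are shorter than the primitive $h(W)$, the ``a primitive word is not an internal factor of its square'' argument forces $|h(u_i)| = |h(v_i)|$ and thus $h(u_i) = h(v_i)$ directly --- no commutation detour and no second balancing pass. The trick you were missing is to scale the exponent \emph{inside} $X^N Y^N$ rather than outside, so that the primitive period itself dominates the side words.
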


\begin{proof}
    Let the equations in the system be $(u_i, v_i)$ for $i \in \{1, \dots, k\}$
    and let $N$ be the length of the system.
    Let $\Sigma = \{x_1, \dots, x_n\}$ be the set of variables
    and let $X, Y$ be as in Lemma~\ref{lem:xy}.
    The equation
    \begin{equation} \label{eq:system-to-baleq}
        \Big(\prod_{i = 1}^k u_i (X^N Y^N)^2 v_i, \prod_{i = 1}^k v_i (X^N Y^N)^2 u_i\Big).
    \end{equation}
    is clearly balanced and can be constructed in polynomial time,
    and every solution of the system is a solution of this equation.
    It remains to be shown that every nonperiodic solution $h$ of~\eqref{eq:system-to-baleq}
    is a solution of the system.

    By Lemma~\ref{lem:xy}, $h(X^N Y^N)$ is primitive,
    and therefore it cannot be an internal factor of $h((X^N Y^N)^2)$.
    For all $i \in \{1, \dots, k\}$ and all $x \in \Sigma$,
    \begin{equation*}
        |u_i (X^N Y^N)^2 v_i|_x = |v_i (X^N Y^N)^2 u_i|_x
    \end{equation*}
    and therefore
    \begin{equation*}
        |h(u_i (X^N Y^N)^2 v_i)| = |h(v_i (X^N Y^N)^2 u_i)|.
    \end{equation*}
    Because $h$ is a solution of~\eqref{eq:system-to-baleq},
    \begin{equation*}
        h(u_i (X^N Y^N)^2 v_i) = h(v_i (X^N Y^N)^2 u_i)
    \end{equation*}
    for all $i$.
    Let us consider some fixed value of $i$.
    Every variable occurs in $X^N$ more times than in $v_i$, so
    \begin{equation*}
        |h(v_i X^N Y^N)| < |h((X^N Y^N)^2)| < |h(u_i (X^N Y^N)^2)|,
    \end{equation*}
    and thus $h(v_i X^N Y^N)$ is a proper prefix of $h(u_i (X^N Y^N)^2)$.
    Because $h(X^N Y^N)$ cannot be an internal factor of $h((X^N Y^N)^2)$,
    it must be $|h(v_i)| \leq |h(u_i)|$.
    Similarly, we can show that $|h(u_i)| \leq |h(v_i)|$.
    It follows that $h(u_i) = h(v_i)$.
    This holds for all $i$, so $h$ is a solution of the system.
\end{proof}

\begin{theorem} \label{thm:satcf}
    $\satcf$ is in PSPACE and NP-hard.
\end{theorem}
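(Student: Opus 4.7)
The plan is to handle the two halves of the theorem separately. The NP-hardness is immediate: as already noted in Section~\ref{sec:basic}, deciding nonperiodic satisfiability of constant-free systems is NP-hard \cite{sa20icalp}. The substance of the theorem is thus the PSPACE upper bound, and my plan is a polynomial-time reduction, via a disjunction of polynomially many queries, from $\satcf$ to $\sat$, which is in PSPACE by \cite{je22}.

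Given an input system $S$ of constant-free equations over variables $\Sigma = \{x_1, \dots, x_n\}$, the key observation is that a morphism $h$ is nonperiodic if and only if $h(x_i x_j) \ne h(x_j x_i)$ for some $i, j$. I would enumerate ordered pairs $(i, j)$ and, for each one, ask $\sat$ whether $S$ has a solution $h$ satisfying this disequation. To turn the disequation into equations, I would introduce two fresh distinct constants $a, b$ together with fresh variables $p, q, q'$, and augment $S$ with
\begin{equation*}
    x_i x_j = p a q, \qquad x_j x_i = p b q'.
\end{equation*}
Call the resulting equation system with constants $S'_{ij}$; it is constructible in polynomial time.

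Any solution of $S'_{ij}$ restricts to a nonperiodic solution of $S$, since $h(x_i x_j)$ and $h(x_j x_i)$ must disagree in position $|h(p)|+1$. For the converse, if $h$ is a nonperiodic solution of $S$ over some target alphabet $\Gamma$, I would postcompose $h$ with the letter-encoding morphism $g : \Gamma^* \to \{a, b\}^*$ used in the proof of Theorem~\ref{thm:gamma}. Since $g$ is injective, $g \circ h$ still solves $S$, and nonperiodicity is preserved because $h(x)h(y) \ne h(y)h(x)$ implies $g(h(x))g(h(y)) \ne g(h(y))g(h(x))$. Picking any pair $(i, j)$ with $(g \circ h)(x_i x_j) \ne (g \circ h)(x_j x_i)$ and the first position where these two equal-length words differ, the two differing letters are necessarily $a$ and $b$ in some order, so after possibly swapping $i$ and $j$ the morphism $g \circ h$ extends to a solution of $S'_{ij}$ by assigning $p$ the common prefix and $q, q'$ the respective remaining suffixes.

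Therefore $S$ has a nonperiodic solution iff some $S'_{ij}$ has a solution. There are only $O(n^2)$ such systems, each of polynomial size, and PSPACE is closed under polynomial-size disjunctions, so $\satcf \in$ PSPACE. The only delicate step is the converse direction: the two differing letters at the chosen position must be forced to be the specific constants $a, b$ independently of the original target alphabet, and this is exactly what the alphabet-relabelling construction from Theorem~\ref{thm:gamma} provides.
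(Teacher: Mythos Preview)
Your proof is correct, and the underlying idea---encode nonperiodicity as a disequation between two words and then express that disequation with fresh constants $a,b$ and fresh prefix/suffix variables---is the same as the paper's. The execution differs in two respects. First, the paper invokes Lemma~\ref{lem:xy} to compress all pairwise noncommutation into the single disequation $h(X)\ne h(Y)$ and then adds just the two equations $(X,xay)$, $(Y,xbz)$; this yields a single many-one reduction $\satcf \leq_p \sat$, whereas you enumerate all $O(n^2)$ ordered pairs $(i,j)$ and obtain a disjunctive truth-table reduction. Second, for the converse direction the paper matches the differing letters to the specific constants $a,b$ by simply relabelling the target alphabet (legitimate because $S$ is constant-free), while you achieve the same effect by postcomposing with the binary encoding $g$ from Theorem~\ref{thm:gamma}. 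The paper's route is a bit tighter (one system instead of quadratically many, and it reuses Lemma~\ref{lem:xy}); yours is more self-contained in that it avoids the $X,Y$ machinery altogether.
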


\begin{proof}
    The NP-hardness was proved in \cite{sa20icalp}.
    $\satcf$ being in PSPACE follows from $\satcf \leq_p \sat$, which can be seen as follows:
    Let $\Sigma = \{x_1, \dots, x_n\}$ and let $X, Y$ be as in Lemma~\ref{lem:xy}.
    Any system $S$ of constant-free equations
    can be extended in polynomial time to a system $S'$ of equations with constants
    by adding the equations $(X, xay)$ and $(Y, xbz)$,
    where $x, y, z$ are new variables and $a, b$ are distinct constant letters.
    We have to show that $S$ has a nonperiodic solution if and only if $S'$ has a solution.
    If $h$ is a nonperiodic solution of $S$,
    then $h(X) \ne h(Y)$ by Lemma~\ref{lem:xy},
    and thus and $h(X) = ua'v$ and $h(Y) = ub'w$
    for some letters $a', b'$ and words $u, v, w$.
    Every morphism we get from $h$ by renaming the letters is also a nonperiodic solution of $S$,
    so we can assume without loss of generality that $a' = a$ and $b' = b$.
    Then we can extend $h$ to a solution $\ext{h}$ of $S'$
    by defining $\ext{h}(x) = u$, $\ext{h}(y) = v$, $\ext{h}(z) = w$.
    On the other hand,
    if $g$ is a solution of $S'$, then a restriction $\res{g}$ of $g$ is a solution of $S$.
    Because
    \begin{equation*}
        \res{g}(X) = g(X) = g(x) a g(y) \ne g(x) b g(z) = g(Y) = \res{g}(Y),
    \end{equation*}
    it follows from Lemma~\ref{lem:xy} that $\res{g}$ is nonperiodic.
    This completes the proof.
\end{proof}

Now we can start analyzing the problems $\pow(n)$ and $\nonprim$.

\begin{theorem} \label{thm:pow}
    Let $n \geq 2$ be an integer.
    Then $\pow(n) \equiv_p \satcf$.
\end{theorem}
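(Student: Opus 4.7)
The theorem asserts a polynomial equivalence, so I would prove both directions separately.

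For $\pow(n) \leq_p \satcf$, the reduction is immediate from Theorem~\ref{thm:gex-eq}: given an instance $(\Sigma, w)$ of $\pow(n)$, output the singleton system $\{(w, x^n)\}$ over the alphabet $\Sigma \cup \{x\}$, where $x$ is a fresh letter. Since $n$ is a fixed constant, $x^n$ has constant size and the reduction runs in polynomial time, and Theorem~\ref{thm:gex-eq} gives exactly the required equivalence between $n \in \gex_\nonper(w)$ and the existence of a nonperiodic solution of $(w, x^n)$.

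For the converse $\satcf \leq_p \pow(n)$, the plan is to collapse a system into a single equation and then encode that equation as a single word. Given a system $S$ over $\Sigma$, first apply Theorem~\ref{thm:balanced} to obtain in polynomial time a balanced constant-free equation $(u, v)$ over $\Sigma$ with the same set of nonperiodic solutions as $S$. Then define $w = u v^{n-1}$; since $n$ is fixed, $|w|$ is polynomial in $|u| + |v|$. Output $(\Sigma, w)$ as the $\pow(n)$ instance.

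The core claim is that $n \in \gex_\nonper(w)$ if and only if $(u, v)$ has a nonperiodic solution. For the substantive direction, suppose a nonperiodic morphism $h$ satisfies $h(w) = r^n$ with $r \in \Gamma^+$. Balancedness of $(u, v)$ forces $|h(u)| = |h(v)| =: L$, hence $|h(w)| = nL$ and $|r| = L$. Matching the length-$L$ prefix of $h(u) \cdot h(v)^{n-1}$ with the length-$L$ prefix of $r^n$ yields $h(u) = r$; the leftover equation $h(v)^{n-1} = r^{n-1}$, together with $|h(v)| = |r|$ and $n - 1 \geq 1$, then yields $h(v) = r$. So $h$ is a nonperiodic solution of $(u, v)$, and hence of $S$. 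Conversely, any nonperiodic solution $h$ of $S$ satisfies $h(u) = h(v) =: r$ and therefore $h(w) = r \cdot r^{n-1} = r^n$.

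The main subtlety to watch for is the requirement $r \in \Gamma^+$ in the definition of $\gex_\nonper$: in the converse just given, we need $r \neq \varepsilon$. This is precisely where it matters that the balanced equation produced by Theorem~\ref{thm:balanced} has both sides containing the factor $X^N Y^N$, in which every variable of $\Sigma$ appears. If $r = \varepsilon$ then $h(u) = \varepsilon$, and since every variable occurs in $u$ this forces $h$ to be identically $\varepsilon$ on $\Sigma$, contradicting the nonperiodicity of $h$. Thus $r \in \Gamma^+$, the reduction is correct, and both inclusions in $\pow(n) \equiv_p \satcf$ hold.
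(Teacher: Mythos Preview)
Your proof is correct and follows essentially the same route as the paper: reduce $\pow(n)$ to $\satcf$ via Theorem~\ref{thm:gex-eq}, and reduce $\satcf$ to $\pow(n)$ by passing through the balanced equation $(u,v)$ of Theorem~\ref{thm:balanced} and outputting $w = uv^{n-1}$. The only cosmetic difference is in the $h(u)=\eps$ edge case: you rely on the specific construction inside Theorem~\ref{thm:balanced} (every variable of $\Sigma$ occurs in $u$), whereas the paper uses only the stated balancedness and then invokes Theorem~\ref{thm:nonper-sigma}.
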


\begin{proof}
    First, we show that $\pow(n) \leq_p \satcf$.
    Given an alphabet $\Sigma$ and a word $w \in \Sigma^+$,
    we can construct in polynomial time
    a constant-free word equation $(w, x^n)$ over $\Sigma \cup \{x\}$,
    where $x \notin \Sigma$ is a new letter.
    By Theorem~\ref{thm:gex-eq},
    $n \in \gex_\nonper(w)$ if and only if this equation has a nonperiodic solution.

    Then, we show that $\satcf \leq_p \pow(n)$.
    Given a system of constant-free word equations,
    we can use Theorem~\ref{thm:balanced} to construct a balanced constant-free equation $(u, v)$
    that has exactly the same set of nonperiodic solutions.
    We can show that $n \in \gex_\nonper(uv^{n - 1})$
    if and only if $(u, v)$ has a nonperiodic solution as follows.
    If $n \in \gex_\nonper(uv^{n - 1})$,
    then $h(uv^{n - 1}) = s^n$ for some nonperiodic morphism and word $s$.
    From $(u, v)$ being balanced it follows that $|h(u)| = |h(v)|$,
    and then $|h(u)| = |h(v)| = |s|$ and thus $h(u) = s = h(v)$,
    meaning that $h$ is a nonperiodic solution of the equation $(u, v)$.
    On the other hand, if $h$ is a nonperiodic solution of $(u, v)$,
    then $h(uv^{n - 1}) = h(u)^n$,
    meaning that $n \in \gex_\nonper(uv^{n - 1})$
    or $h(u) = \eps$.
    But if $h(u) = \eps$,
    then it must be $h(a) = \eps$ for all $a \in \alphabet(u) = \alphabet(v)$,
    which means that $h$ being nonperiodic
    implies that $\Sigma \smallsetminus \alphabet(u) \ne \varnothing$,
    and then $n \in \gex_\nonper(uv^{n - 1})$ by Theorem~\ref{thm:nonper-sigma}.
\end{proof}

\begin{theorem} \label{thm:nonprim}
    We have $\nonprim \equiv_p \satcf$.
\end{theorem}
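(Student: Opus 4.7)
The plan is to prove both $\nonprim \leq_p \satcf$ and $\satcf \leq_p \nonprim$ by leveraging Theorems~\ref{thm:nonprim-eq} and~\ref{thm:balanced} together with the Lyndon--Sch\"utzenberger theorem.

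The easy direction $\nonprim \leq_p \satcf$ is a direct application of Theorem~\ref{thm:nonprim-eq}: given an input $(\Sigma, w)$, I would output the system $\{(w, x^2 y^3), (xy, yx)\}$ over $\Sigma \cup \{x, y\}$ with fresh variables $x, y \notin \Sigma$. By Theorem~\ref{thm:nonprim-eq}, this system has a nonperiodic solution if and only if some $n \geq 2$ lies in $\gex_\nonper(w)$, which is exactly the condition $w \in \nonprim$, and the construction is clearly polynomial.

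For the harder direction $\satcf \leq_p \nonprim$, the plan is to combine Theorem~\ref{thm:balanced} with a single application of Lyndon--Sch\"utzenberger. Given a system $S$ of constant-free equations over a variable alphabet $\Sigma$, I would first invoke Theorem~\ref{thm:balanced} to construct in polynomial time a balanced constant-free equation $(u, v)$ over $\Sigma$ with the same nonperiodic solutions as $S$; then output $(\Sigma, w)$ where $w = u^2 v^3$. The forward direction is routine: a nonperiodic solution $h$ of $S$ also solves $(u, v)$, so $h(u) = h(v)$, and since the balanced equation from Theorem~\ref{thm:balanced} involves every variable of $\Sigma$, the nonperiodicity of $h$ forces $h(u) \ne \eps$; then $h(w) = h(u)^5$ is a fifth power, so $w \in \nonprim$. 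For the converse, suppose $h$ is nonperiodic and $h(w) = s^n$ for some $n \geq 2$ and some nonempty $s$. Then $h(u)^2 h(v)^3 = s^n$, so the assignment $\phi(x) = s$, $\phi(y) = h(u)$, $\phi(z) = h(v)$ is a solution of the constant-free equation $(x^n, y^2 z^3)$ over $\{x, y, z\}$; since $n, 2, 3 \geq 2$, Lyndon--Sch\"utzenberger forces $\phi$ to be periodic, so $s$, $h(u)$, $h(v)$ are powers of a common word $r$. Writing $h(u) = r^a$ and $h(v) = r^b$ (with $r \ne \eps$ since $s \ne \eps$), the balanced property of $(u, v)$ gives $|h(u)| = |h(v)|$ and hence $a = b$, so $h(u) = h(v)$; therefore $h$ is a nonperiodic solution of $(u, v)$ and, by Theorem~\ref{thm:balanced}, of $S$.

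The main obstacle is the converse step just sketched. The specific choice $w = u^2 v^3$ is engineered precisely so that whenever $h(w)$ is an $n$th power with $n \geq 2$ we land in the Lyndon--Sch\"utzenberger setting $(x^k, y^m z^n)$ with all three exponents at least two, and the balanced structure produced by Theorem~\ref{thm:balanced} is exactly what upgrades the resulting common-root conclusion to the literal equality $h(u) = h(v)$ needed to close the reduction.
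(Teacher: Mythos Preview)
Your easy direction $\nonprim \leq_p \satcf$ matches the paper's. Your hard direction $\satcf \leq_p \nonprim$ is correct but takes a genuinely different and simpler route than the paper.

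The paper reduces the balanced equation $(u,v)$ to the word $Z^4 u Z^4 v$, where $Z = X^{|uv|} Y^{|uv|}$ with $X,Y$ as in Lemma~\ref{lem:xy}. To show that $h(Z^4 u Z^4 v) = r^n$ with $n \geq 3$ forces $h(u) = h(v)$, the paper applies Fine--Wilf to conclude that $r$ and $h(Z)$ share a primitive root, invokes Lemma~\ref{lem:xy} for the primitivity of $h(Z)$, and then finishes with a factor argument. Your choice $w = u^2 v^3$ sidesteps all of this machinery: a single appeal to Lyndon--Sch\"utzenberger on $s^n = h(u)^2 h(v)^3$, combined with the length equality $|h(u)| = |h(v)|$ coming from balancedness, already yields $h(u) = h(v)$. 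So you avoid Lemma~\ref{lem:xy} and Fine--Wilf entirely, and the output word is even shorter. The paper's construction, on the other hand, does not depend on any property of the balanced equation beyond balancedness itself, since the padding $Z$ carries all the variables.

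One small point to tighten: in the forward direction you use that the balanced equation produced by Theorem~\ref{thm:balanced} satisfies $\alphabet(u) = \Sigma$, in order to conclude $h(u) \ne \eps$ from the nonperiodicity of $h$. This is true of the specific construction in the \emph{proof} of Theorem~\ref{thm:balanced} (both sides contain $X^N Y^N$), but it is not part of the theorem's \emph{statement}. You should either cite the construction explicitly or, as the paper does in the analogous spot in Theorem~\ref{thm:pow}, handle the degenerate case via Theorem~\ref{thm:nonper-sigma}: if $h(u) = \eps$ then $h$ vanishes on $\alphabet(u) = \alphabet(v) = \alphabet(w)$, so nonperiodicity forces $\alphabet(w) \subsetneq \Sigma$, whence $\gex_\nonper(w) = \Z_+$ and $w \in \nonprim$ anyway.
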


\begin{proof}
    First, we show that $\nonprim \leq_p \satcf$.
    Given an alphabet $\Sigma$ and a word $w \in \Sigma^+$,
    we can construct in polynomial time
    a system of two constant-free word equations $\{(w, x^2 y^3), (xy, yx)\}$
    over $\Sigma \cup \{x, y\}$,
    where $x, y \notin \Sigma$ are new letters.
    By Theorem~\ref{thm:nonprim-eq},
    there exists $n \geq 2$ such that $n \in \gex_\nonper(w)$
    if and only if this system has a nonperiodic solution.

    Then, we show that $\satcf \leq_p \nonprim$.
    Given a system of constant-free word equations,
    we can use Theorem~\ref{thm:balanced} to construct a balanced constant-free equation $(u, v)$
    that has exactly the same set of nonperiodic solutions.
    Let $\Sigma = \{x_1, \dots, x_n\}$ be the set of variables
    and let $X, Y$ be as in Lemma~\ref{lem:xy}.
    Let
    \begin{equation*}
        Z = X^{|uv|} Y^{|uv|}.
    \end{equation*}
    We can show that there exists $n \in \gex_\nonper(Z^4 u Z^4 v)$, $n \geq 2$,
    if and only if $(u, v)$ has a nonperiodic solution as follows.
    If $h$ is a nonperiodic solution of $(u, v)$, then $h(Z^4 u Z^4 v) = h(Z^4 u)^2$,
    and $h(Z^4 u) \ne \eps$ because every letter occurs in $Z$,
    so $2 \in \gex_\nonper(Z^4 u Z^4 v)$.
    On the other hand, if $n \in \gex_\nonper(Z^4 u Z^4 v)$, $n \geq 2$, then
    \begin{equation*}
        h(Z^4 u Z^4 v) = r^n
    \end{equation*}
    for some nonempty word $r$ and $h \in \nonper$.
    If $n = 2$, then $h(Z^4 u) = r = h(Z^4 v)$ and thus $h(u) = h(v)$.
    If $n \geq 3$, then $h(Z^4)$ and $r^n$ have a common prefix of length
    \begin{align*}
        |h(Z^4)|
        &= |h(Z)| + |h(Z^9)| / 3\\
        &\geq |h(Z)| + |h(Z^8) h(uv)| / 3
        = |h(Z)| + |r^n| / 3\\
        &\geq |h(Z)| + |r|.
    \end{align*}
    From the theorem of Fine and Wilf it follows that
    $h(Z)$ and $r$ have the same primitive root.
    By Lemma~\ref{lem:xy}, $h(Z)$ is primitive, so $r \in h(Z)^+$.
    It follows that
    \begin{equation*}
        h(u) h(Z)^4 h(v) = h(Z)^m
    \end{equation*}
    for some integer $m$.
    Because $h(Z)$ cannot be an internal factor of $h(Z)^2$, it must be $h(u), h(v) \in h(Z)^*$.
    Because $(u, v)$ is balanced, $|h(u)| = |h(v)|$ and thus $h(u) = h(v)$.
\end{proof}

\begin{corollary}
    Let $n \geq 2$ be an integer.
    Then $\pow(n)$ and $\nonprim$ are in PSPACE and NP-hard.
\end{corollary}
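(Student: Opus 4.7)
The plan is to derive the corollary by direct combination of the three preceding results: Theorem~\ref{thm:satcf}, which establishes that $\satcf$ is in PSPACE and NP-hard, together with Theorem~\ref{thm:pow} and Theorem~\ref{thm:nonprim}, which state the polynomial equivalences $\pow(n) \equiv_p \satcf$ and $\nonprim \equiv_p \satcf$. Since all the substantive work is already encapsulated in these theorems, no new construction is needed.

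First I would observe that NP-hardness is preserved under polynomial-time reductions in the ``from'' direction: if $\satcf \leq_p \pow(n)$ and $\satcf$ is NP-hard, then so is $\pow(n)$, because any NP problem reduces in polynomial time to $\satcf$ and hence, by composition, to $\pow(n)$. The same argument with $\nonprim$ in place of $\pow(n)$ gives NP-hardness of $\nonprim$. Both required reductions are supplied by Theorems~\ref{thm:pow} and \ref{thm:nonprim}.

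Next I would handle containment in PSPACE. The class PSPACE is closed under polynomial-time (many-one) reductions, so from $\pow(n) \leq_p \satcf$ and $\satcf \in \mathrm{PSPACE}$ we obtain $\pow(n) \in \mathrm{PSPACE}$; the argument for $\nonprim$ is identical. Concretely, to decide an instance of $\pow(n)$ or $\nonprim$, one applies the polynomial-time reduction given in Theorem~\ref{thm:pow} or Theorem~\ref{thm:nonprim} to produce a $\satcf$ instance of polynomial size, then runs the PSPACE decision procedure for $\satcf$ guaranteed by Theorem~\ref{thm:satcf}.

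There is really no main obstacle here, since the corollary is a bookkeeping consequence of the equivalences already established; the only thing to mention explicitly is the standard closure of PSPACE and of the class of NP-hard problems under $\leq_p$. Accordingly, the write-up will be a short paragraph citing Theorems~\ref{thm:satcf}, \ref{thm:pow}, and \ref{thm:nonprim} and invoking these closure properties.
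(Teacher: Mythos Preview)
Your proposal is correct and matches the paper's approach exactly: the paper's proof consists of the single sentence ``Follows from Theorems~\ref{thm:satcf}, \ref{thm:pow}, \ref{thm:nonprim}.'' Your write-up simply spells out the standard closure of PSPACE and of NP-hardness under $\leq_p$ that this citation implicitly invokes, so there is nothing to change beyond possibly shortening it.
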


\begin{proof}
    Follows from Theorems~\ref{thm:satcf}, \ref{thm:pow}, \ref{thm:nonprim}.
\end{proof}

We could also study the problems of deciding
whether $n \in \gex_\inj(w)$ or whether $n \in \pex_\nonper(w)$.
The property of being primitive cannot be expressed by word equations~\cite{kamipl00},
which means that $\pex_\nonper(w)$ might be more complicated than $\gex_\nonper(w)$
from an algorithmic point of view.

\section{Conclusion}

In this article, we have characterized the words $w$ such that $\pex_\inj(w)$ is infinite,
and we have proved upper and lower bounds for how big $\max(\pex_\inj(w))$ can be
for other words $w$.
Proving bounds for fixed alphabets remains an open problem.
Similar questions could be studied for $\nonper$ in place of $\inj$.

We have also proved that the algorithmic problems of determining
whether $n \in \gex_\nonper(w)$
or whether $\gex_\nonper(w) \ne \{1\}$ for a given word $w$
are closely connected to the satisfiability problem of word equations,
leading to upper and lower bounds for the complexity of these problems.
Similar questions could be studied for $\pex$ in place of $\gex$
or for $\inj$ in place of $\nonper$.

\bibliography{ref}

\end{document}